\documentclass{article}%
\usepackage{authblk}
\usepackage{amsmath}
\usepackage{amsfonts}
\usepackage{amssymb}
\usepackage{graphicx}
\usepackage{geometry}
\usepackage{url}
\usepackage{scalefnt}
\usepackage{cite}
\usepackage[colorlinks=true,citecolor=blue,urlcolor=blue,bookmarks=true,hypertexnames=true]%
{hyperref}
\usepackage{subfigure}%
\setcounter{MaxMatrixCols}{30}
\providecommand{\U}[1]{\protect\rule{.1in}{.1in}}
\numberwithin{equation}{section}
\newtheorem{theorem}{Theorem}

\newtheorem{proposition}[theorem]{Proposition}

\newenvironment{proof}[1][Proof]{\noindent\textbf{#1.} }{\ \rule{0.5em}{0.5em}}
\geometry{left=1in,right=1in,top=1in,bottom=1in}
\begin{document}

\title{Efficient Computation of Periodic Orbits of Forced Rayleigh Equation in the
Framework of Novel Asymptotic Structures}
\author[1] {Aniruddha Palit\thanks{email: mail2apalit@gmail.com}}
\author[2] {Dhurjati Prasad Datta\thanks{Corresponding author; email: dp${_{-}}$datta@yahoo.com}}
\author[3] {Santanu Raut\thanks{email: raut${_{-}}$santanu@yahoo.com}}

\affil[1] {Department of Mathematics, Surya Sen Mahavidyalaya, Siliguri, Pin: 734004, India}
\affil[2] {ICARD, University of North Bengal, Siliguri, Pin: 734013, India}
\affil[3] {Department of Mathematics, Mathabhanga College, Coochbehar, Pin: 736146, India}
\date{}
\maketitle

\begin{abstract}
Higher precision efficient computation of period 1 relaxation oscillations of
strongly nonlinear and singularly perturbed Rayleigh equations with external
periodic forcing is presented. The computations are performed in the context
of conventional renormalization group method (RGM). We demonstrate that
although a slight homotopically modified RGM could generate approximate
periodic orbits that agree qualitatively with the exact orbits, the method,
nevertheless, fails miserably to reduce the large quantitative disagreement
between the theoretically computed results with that of exact numerical
orbits. In the second part of the work we present a novel asymptotic analysis
incorporating SL(2,R) invariant nonlinear deformation of slower time scales,
$t_{n} =\varepsilon^{n}t, \ n\rightarrow\infty, \ \varepsilon<1$, for
asymptotic late time $t$, to a nonlinear time $T_{n}=t_{n}\sigma(t_{n})$,
where the deformation factor $\sigma(t_{n})>0$ respects some well defined
SL(2,R) constraints. Motivations and detailed applications of such nonlinear
asymptotic structures are explained in performing very high accuracy ($>
98\%$) computations of relaxation orbits. Existence of an interesting
condensation and rarefaction phenomenon in connection with dynamically
adjustable scales in the context of a slow-fast dynamical system is explained
and verified numerically.

\end{abstract}

\medskip

\textbf{Keywords:} Asymptotic Analysis, Nonlinear Ordinary Differential
Equations, Renormalization Group\medskip

\textbf{Mathematics Subject Classification:} 34E10, 34A34, 34E15

\section{Introduction}

The aim of the present paper is to formulate an efficient computation scheme
of periodic orbits of a strongly nonlinear oscillator
\cite{nayfey_nonlinear_1995,jordan_nonlinear_1999, Laxman,
bender_advanced_1999}. The importance of high precision computation in applied
mathematics and science need not be overemphasized \cite{Bailey_2012}. The
higher precision quantitatively accurate computation of periodic orbits is
facilitated in the framework of a novel asymptotic analysis \cite{ds15,
palit_comparative_2016, dss18, dsr20 }, so as to allow significant numerical
improvements in the computations of periodic orbits by the conventional
asymptotic techniques such as renormalization group method(RGM)
\cite{Goldenfeld_Lectures_1992}, multiple scale method (MSM)
\cite{nayfey_nonlinear_1995}, homotopy analysis method \cite{liao_beyond_2004,
Cui-NA-2018} etc. As a prototype of strongly nonlinear oscillator, we consider
here singularly perturbed Rayleigh Equation (SRLE)
\cite{nayfey_nonlinear_1995, bender_advanced_1999} equation with an external
periodic excitation%

\begin{equation}
\varepsilon\ddot{x}+\left(  \frac{1}{3}\dot{x}^{3}-\dot{x}\right)
+x=\varepsilon\ F\cos\left(  \Omega\ t\right)  , \label{SFRL Eqn}%
\end{equation}
where dots are used to designate the derivatives with respect to time.
Rayleigh Equation, either regular or singularly perturbed, and a close cousin
of Van der Pol equation \cite{palit_comparative_2016}, is one of the
extensively studied nonlinear oscillatory systems because of its wide
applications in acoustics, physiology and cardiac cycles, solid mechanics,
electronics and nonlinear electrical circuits, musical instruments and many
other different fields \cite{nayfey_nonlinear_1995, jordan_nonlinear_1999,
li_new_2007, stutt_theory_1877}. Singularly perturbed Rayleigh equation is
equivalently related closely to the regularly perturbed Rayleigh equation
(RLE)
\begin{equation}
\ddot{x}+\varepsilon\left(  \frac{1}{3}\dot{x}^{3}-\dot{x}\right)
+x=\varepsilon\ F\cos\left(  \Omega\ t\right)  , \label{RFRL Eqn}%
\end{equation}
for a large nonlinearity parameter $\varepsilon>1$. It is well known that both
the Rayleigh equations $\left(  \text{\ref{SFRL Eqn}}\right)  $ and $\left(
\text{\ref{RFRL Eqn}}\right)  $ with $F=0$ have unique, stable periodic
solutions (orbits), known as the \textit{limit cycle} in the appropriate phase
plane for all $\varepsilon>0$. However, the periodic orbit for singularly
perturbed equation and hence, for larger values of $\varepsilon\gtrapprox1$ in
the regular Rayleigh equation, the periodic cycle is a \emph{relaxation
oscillation}, consisting of slow and fast developing components. Existence of
slow and fast motions \emph{makes traditional analytical techniques
ineffective in an efficient estimation of such relaxation oscillations}
\cite{Xu-Luo-2019, shukla2014new}. Dynamical systems experiencing the
fast-slow motions appear widely in engineering and other applied sciences
\cite{Cui-NA-2018}. The slow-fast periodic motions in such dynamical systems
cannot be easily tackled, because such slow-fast periodic motions need
\emph{many more harmonic terms} to get appropriate approximate solutions.
Usually, the fast movement behaves like an impulsive motion and so grows very
quickly, where as the slow movement is like almost zero velocity movement and
hence the system relaxes very slowly. As will become evident, the present
approach equipped with novel asymptotic quantities, however, would yield such
high precision orbits with much smaller number of harmonic terms only.

To recall, the general study, particularly in the context of high precision
computations, of orbits of the nonlinear ordinary differential equations
(NODE) has always been a challenging task. Exact computation/determination of
analytic solutions of such differential equations is not always possible
because most of them are not generally reducible into exactly integrable form
involving standard functions \cite{Laxman, jordan_nonlinear_1999}. Moreover, a
wide class of NODEs are known to have sensitive dependence on initial
conditions, leading to late time asymptotic unpredictability and chaotic
behaviour. The transition from small time continuity on initial conditions to
the final late time loss of continuity in a nonlinear system generally
proceeds via a universal route, called the period doubling bifurcation route
to chaos. As one or more control (nonlinearity) parameter(s) $\varepsilon$
(say) in the nonlinear system is slowly changed from smaller values to larger
values progressively, the system experiences a sequence of bifurcations in
which a period $n$ orbit is changed suddenly to a period $2n$ orbit at the
bifurcation point $\varepsilon_{n}$ leading finally to chaos \cite{Laxman,
jordan_nonlinear_1999, Perko} . In the absence of exact computability (except
for very special cases, for instance, the Duffing oscillator) of amplitudes,
phases and solutions of isolated periodic orbits (Limit Cycles), along with
precise enumeration of period doubling routes of nonlinear systems,
\emph{higher precision asymptotic} determinations of the periodic $2n$ cycles
are interesting, not only on theoretical ground but also have significant
applications\cite{Cui-NA-2018}.

Over past few decades, various non-perturbative modifications of naive
perturbation method such as Method of Multiple Scales, Method of Boundary
Layer \cite{jordan_nonlinear_1999}, WKB Method \cite{bender_advanced_1999},
Homotopy Averaging Method \cite{cveticanin_periodic_2008}, Homotopy Analysis
Method \cite{liao_beyond_2004}, Variational Iteration Method
\cite{he_asymptotic_2006}, homotopy RG method\cite{liu_renormalization_2017}
etc., have been investigated and advocated widely for faster and efficient
computation of periodic oscillations of strongly nonlinear systems as well as
to singularly perturbed problems, that should yield reasonable fits with
experimental values \emph{for any value} of the control parameter
$\varepsilon>0$. But over time it has become evident that none of the these
asymptotic methods could yield uniformly valid approximate solutions to the
system variables concerned, both for a large control parameter space as well
as for sufficiently large time \cite{shukla2014new}, unless special care and
methods are invented and considered. Recently, Xu and Luo \cite{Xu-Luo-2019}
presented semi-analytic implicit mapping scheme for relaxation oscillations of
forced Van der Pol oscillator. The semi-analytic method is also applied in
various other nonlinear systems such as double pendulum, and many others, see
for instance, {\cite{Luo-Guo-2017}}. In \cite{Luo_book}, Luo presented higher
precision computations of periodic orbits in the context of so called
generalized averaging method. An efficient computation of periodic orbits of
forced Van der Pol-Duffing equation was also presented recently
\cite{Cui-NA-2018} in the context of homotopy analysis method. In all these
cited works \emph{a large number of harmonic terms}, however, are necessary
for the said higher precision computations of periodic orbits. Need for
invoking extensive numerical analysis in attaining highly accurate periodic
orbits, as evidenced in \cite{Luo_book, Xu-Luo-2019, Cui-NA-2018}, based on
different asymptotic modeling and methods, make rooms for further research in
this interesting area of NODE looking \emph{for new and novel theoretical
insights} that would not only \emph{reduce the burden of numerical analysis
and computational time, but might also offer new insights into asymptotic
properties of nonlinear dynamics.}

In the present work, we investigate a modified and improved renormalization
group (RG) method (IRGM) incorporating a novel asymptotic structure, called
$SL(2,%
\mathbb{R}
)$ duality structure, that is introduced and is being investigated previously
by Datta et. al. \cite{ds15, palit_comparative_2016, dss18, dsr20, dr1} in
various nonlinear applications. An application of IRGM and duality structure
was presented in \cite{palit_comparative_2016} in the context of Rayleigh
equation $\left(  \text{\ref{RFRL Eqn}}\right)  $ and Van der Pol equation
(without forcing)
\begin{equation}
\ddot{x}+\varepsilon\ \dot{x}\left(  x^{2}-1\right)  +x=0,\label{VdP}%
\end{equation}
where relative advantage of IRGM over Homotopy analysis method was established
along with high precession analysis of limit cycle orbits (see also
\cite{liu_renormalization_2017}). By efficiency of an asymptotic method we
mean that one could yield $95\%$ or above fits with the experimental values,
only with a lower order approximate computation (1st or 2nd order, in the
perturbative sense), while the other theory would require very high order of
computations to attain that amount of precision (see for instance,
\cite{Luo_book, Cui-NA-2018}). We now apply IRGM equipped with novel
renormalized asymptotic duality structures in obtaining high precision fits
with the theoretical results and that of the experimental results for periodic
orbits of periodically forced Rayleigh oscillators. To emphasize once more,
continual improvements in asymptotic techniques is indeed a very important
research area in applied mathematics, not only for a better theoretical
understanding of periodic or nonperiodic orbits of nonlinear oscillator
problems, but must also have significant engineering and other applications
\cite{Cui-NA-2018, Xu-Luo-2019, Luo-Guo-2017}. A substantial extension of the
$SL(2,%
\mathbb{R}
)$ asymptotic analysis is also formulated recently \cite{dpnw}.

To recall, the improvement of RG Method was made possible in
\cite{palit_comparative_2016} by introduction of $SL(2,%
\mathbb{R}
)$ invariant, renormalized (control) variables, which can be used to ensure an
improved convergence of approximate solutions through continuous deformations
to the exact solution of the nonlinear oscillation problem concerned, with any
desired accuracy level. The introduction of continuous deformation is based on
the novel idea of \textit{nonlinear} time \cite{ds15, dss18, dsr20, dpnw},
incorporating ideas of \emph{asymptotic duality structure} and dynamically
adjustable time scales \cite{dss18,dsr20,dpnw}. It is well known that $\left(
\text{\ref{RFRL Eqn}}\right)  $ has unique limit cycle solution for
$\varepsilon>0$ under the action of near resonant small external periodic
force, i.e. when $F\neq0$ and the frequency $\Omega$ of the external applied
force is close to the frequency of the system or $\Omega\approx1$. In this
article we shall find, for a given $\varepsilon>1$, the values of $F$ and
$\Omega$ for which a significant quantitative difference can be observed
between the exactly computed numerical limit cycle solution and that computed
from RGM. Next we shall improve the initial RG approximation of the exact
solution, using the improved RGM and dynamical time scales along independent
phase space variables, achieving better and faster convergence to the exact
solution. In fact, this article is an extension of the work in
\cite{palit_comparative_2016} for the forced Rayleigh equations with strong
nonlinearity, and hence singularly perturbed, equations $\left(
\text{\ref{SFRL Eqn}}\right)  $ and $\left(  \text{\ref{RFRL Eqn}}\right)  $.
Further, the numerical results presented here demonstrate an interesting
condensation and rarefaction phenomenon in the numerically computed data
points based on the improved RGM calculations and involving the dynamically
adjustable time scales (c.f. Section \ref{Sec Examples}) thus verifying
explicitly the dynamics of slow fast behaviours. Judicious choices of
dynamical time scales would also help reducing the computational complexity in
achieving high precision results.

The paper is organized as follows. In Section \ref{Sec RG}, we review RG
method in the context of singular Rayleigh equation. The failure of usual RG
method is next presented in subsection \ref{Sec RG Failure} when the
singularly perturbed problem is presented as slow-fast system in the phase
plane. In subsection \ref{SRLE}, we demonstrate that the RG method could be
made effective when the singular problem is translated into a regular problem
with a large nonlinearity parameter but, nevertheless, interpreted in the
sense of homotopy deformation \cite{liu_renormalization_2017} involving a
small homotopy parameter $\mu$. The desired result, viz., the limit cycle
solution, is then obtained by continuity by fixing finally $\mu$ to the
limiting value $\mu=1$. In the next two subsections \ref{Sec Alt HRGM} and
\ref{Reg RLE}, we present analogous computations of homotopy aided RG
amplitude and phase equations of regular Rayleigh equation with strong
nonlinearity, and singular Rayleigh equation with various alternate choices of
homotopy deformations respectively. Interestingly, we point out that final
frequency-amplitude equations for all these various homotopy choices are
identical at the order $O(1)$ and hence independent of how the homotopy
perturbation schemes are invoked and implemented. In Section
\ref{Sec Asymptotic Duality}, we present the analysis based on the novel
asymptotic structures and show in Section \ref{Sec Examples}, how the said
structures could successfully yield highly efficient fits with the phase plane
orbits of relaxation oscillations that could have been attained in the
traditional schemes only with very high order of asymptotic computations
involving many more harmonic terms. In the discussion subsection
\ref{Sec Discussion}, we compare our results with those available in current
literature. Finally, we summarize our main conclusions, and remark on future
scope of research in the concluding Section \ref{Sec Conclusion}.

\section{Renormalization Group Method and Singular Rayleigh
Equation\label{Sec RG}}

The Renormalization Group Method (RGM) has a very hallowed history, being
originally formulated in the context of quantum field theory and critical
phenomena \cite{Goldenfeld_Lectures_1992} and later had seen a wide range of
applications in various nonlinear problems, such as solid state physics, fluid
mechanics, cosmology, fractal geometry and many others. The theory of
renormalization group is known to be closely related to the concept of
intermediate asymptotics \cite{barenblatt_scaling_1996}. The applications of
RGM to the NODE was first considered by Chen, Goldenfeld and Oono (CGO) in
\cite{chen_renormalization_1996}. Different authors
\cite{deville_analysis_2008,
palit_comparative_2016,sarkar_renormalization_2011} used the RGM in the
computations of analytic approximations of solutions of various nonlinear
differential systems.

Studies by different authors \cite{chen_renormalization_1996,
sarkar_renormalization_2011} reveal that the RGM has various practical
advantages over other conventional methods such as Boundary Layer theory,
method of Multiple Scales, WKB method etc. In traditional methods, various
\textit{guage functions}, such as fractional power laws or exponential or
logarithmic functions of nonlinearity parameters such as $\varepsilon$ are
usually introduced in an \textit{ad hoc} manner. However, such gauge functions
arise quite \emph{naturally} from the algorithm of RGM. One does not require
to fix such a gauge function from asymptotic matching, or power counting.
However, RGM does have its own limitations. It is shown in
\cite{palit_comparative_2016} that RGM fails to give uniformly valid
approximate solutions for larger values of $\varepsilon\gtrapprox O(1)$. Liu
\cite{liu_renormalization_2017} presented a detailed analysis of failures of
RGM and formulated an improvement of RGM by the so-called Homotopy
Renormalization method (see also \cite{Twin-well-1993}). In Section 2.1, we
show that the perturbative RGM also fails to yield physically relevant results
in the singular Rayleigh system.

The limit cycle solution of the system $\left(  \text{\ref{RFRL Eqn}}\right)
$ with $F=0$ is studied by Palit and Datta in \cite{palit_comparative_2016}
for different values of $\varepsilon>0$. The authors observed that the higher
order computations of this system using RGM fail to improve the classical
results. Further, there does not seem to exist any handle in RGM to ensure
convergence of the approximate solutions to the exact (numerical) solution to
within any specified error bound. Moreover, vary laborious computations needed
generally in the classical RGM for higher order calculations make it quite
improbable for higher order computations. Thus, an improvement of this method
is necessary so that one can ensure convergence of the approximate solutions
to the exact one through a minimal level of computation. With this aim, the
theory of RGM is improved by introduction of some homotopy like deformation
parameters, in association with the framework of an asymptotically nonlinear
time, which we shall discuss in the next section. These control (deformation)
parameters essentially ensured the convergence, using the idea of continuous
deformation of the topological homotopy deformation theory. The introduction
of these control parameters induces the convergence to the approximate
solutions with only a minimal order computation, reducing the computational
complexity of RGM and at the same time improves its efficacy.

The formalism of RGM for the differential equation was introduced by CGO
\cite{chen_renormalization_1996}. This method starts with the naive
perturbative solution of an initial value problem having initial time $t_{0}$.
For periodic oscillations, it involves terms like%
\[
\left(  t-t_{0}\right)  \sin t\text{, }\left(  t-t_{0}\right)  \cos t\text{
etc.}%
\]
which are secular or unbounded as $t$ increases asymptotically. Consequently
the resultant perturbative solution fails to remain periodic and breaks down
for sufficiently large $t$. In order to eliminate this kind of secular terms
CGO introduced an arbitrary time $\tau$ and split the time difference
$t-t_{0}$ as $t-\tau+\tau-t_{0}$ and used RGM to generate a periodic solution
out of this naive perturbative series. One important step, which will be
referred to again later in this section, is the fact that the solution $x$
should not depend on the arbitrary time $\tau$. Therefore, $x$ must be
independent of $\tau$ so that one obtains the renormalization condition%
\begin{equation}
\left.  \frac{dx}{d\tau}\right\vert _{\tau=t}=0\text{.} \label{CGO-RG Eq}%
\end{equation}
DeVille \cite{deville_analysis_2008} introduced an equivalent simplified
version of the RGM which was adopted by Palit and Datta
\cite{palit_comparative_2016}. We shall investigate the solution of the of the
systems $\left(  \text{\ref{SFRL Eqn}}\right)  $ and $\left(
\text{\ref{RFRL Eqn}}\right)  $ using the approach given by DeVille. In this
method the naive perturbative solution of an initial value nonlinear
differential system is derived in terms of complex numbers involving a complex
constant of motion $A$ $($see for detail section 5 in
\cite{deville_analysis_2008}$)$. The initial time $t_{0}$ is taken as
arbitrary with the initial condition $w\left(  t_{0}\right)  $. In the second
step they renormalize the initial condition into $w\left(  t_{0}\right)  $ by
absorbing the time independent and bounded terms in the naive expansion. This
step renormalizes the constant of motion $A$ and generates its counterpart
$\mathcal{A=A}\left(  t_{0}\right)  $ by absorbing the homogeneous parts of
the solution into it. This leaves the solution $x$ having few secular terms
involving%
\[
\left(  t-t_{0}\right)  e^{i\left(  t-t_{0}\right)  },\ \left(  t-t_{0}%
\right)  ^{2}e^{i\left(  t-t_{0}\right)  }\text{ etc.}%
\]
along with their complex conjugates. In the third step the renormalization
condition $\left(  \text{\ref{CGO-RG Eq}}\right)  $ becomes%
\begin{equation}
\left.  \frac{dx}{dt_{0}}\right\vert _{t_{0}=t}=0 \label{RG Eq}%
\end{equation}
which under the transformation%
\begin{equation}
\mathcal{A=}\frac{R}{2}e^{i\theta} \label{Complex to Polar}%
\end{equation}
generates RG flow equations in the amplitude $R$ and the phase $\theta$. To
derive frequency response curve of a forced oscillator, one would generally
require to replace $\theta$ by $\theta(t)=\Omega t+\phi(t)$, that eliminates
the zeroth order constant term, in favour of a term involving the detuning
parameter $\sigma$, in the right hand side of the phase flow equation. Here,
$\Omega=\omega+\sigma\varepsilon$, is the response frequency determined by the
natural frequency $\omega$ and $O(\varepsilon)$ detuning $\sigma$ (c.f.
Section 2.2-4 and Appendix). A brief review of RGM in the context of forced
Rayleigh equation near primary resonance, along with the associated
frequency-amplitude response equation and the associated stability analysis
\cite{nayfey_nonlinear_1995} is given in Appendix.

\subsection{Failure of RGM in Slow-fast system\label{Sec RG Failure}}

To present the case of failure of the classical RGM in the Rayleigh slow-fast
system, we consider, for simplicity, the singular Rayleigh equation $\left(
\text{\ref{SFRL Eqn}}\right)  $ with $F=0$ and $0<\varepsilon<1\ $ which on
differentiating with respect to time has the form
\[
\varepsilon\dddot{x}+\left(  \dot{x}^{2}-1\right)  \ddot{x}+\dot{x}=0.
\]
Assuming%
\[
\dot{x}=y
\]
one gets%
\begin{equation}
\varepsilon\ddot{y}+\left(  y^{2}-1\right)  \dot{y}+y=0,\ 0<\varepsilon<1.
\label{Singular VdP Eq}%
\end{equation}
This is nothing but the singular Van der Pol equation. Since the limit cycle
solution of $\left(  \text{\ref{Singular VdP Eq}}\right)  $ does not depend
upon the initial condition, so without loss of generality, we take the initial
condition as%
\begin{equation}
y\left(  t_{0}\right)  =A_{0}\text{ and }\dot{y}\left(  t_{0}\right)
=B_{0}\text{.} \label{SVdP IC}%
\end{equation}
We write equation (\ref{Singular VdP Eq}) as an autonomous system
\begin{subequations}
\label{SVdP Eq}%
\begin{align}
\dot{y}  &  =z\label{SVdP Eq1}\\
\varepsilon\dot{z}  &  =-\left(  y^{2}-1\right)  z-y \label{SVdP Eq2}%
\end{align}
and the initial condition $\left(  \text{\ref{SVdP IC}}\right)  $ can be
written as%
\end{subequations}
\begin{equation}
y\left(  t_{0}\right)  =A_{0}\text{ and }z\left(  t_{0}\right)  =B_{0}\text{.}
\label{SVdP Auto IC}%
\end{equation}
We introduce the standard boundary layer scaling time $\tau$ using the
transformation
\begin{equation}
t=\varepsilon\ \tau\label{Slow Fast Time}%
\end{equation}
and hence rewrite the autonomous system $\left(  \text{\ref{SVdP Eq}}\right)
$ as
\begin{subequations}
\label{SVdP New}%
\begin{align}
y^{\prime}  &  =\varepsilon z\label{SVdP New Eq1}\\
z^{\prime}  &  =-\left(  y^{2}-1\right)  z-y \label{SVdP New Eq2}%
\end{align}
with the initial conditions $y\left(  \tau_{0}\right)  =A_{0}\text{\ and
\ }z\left(  \tau_{0}\right)  =B_{0}\text{,}$ where $^{\prime}$ represents the
derivative with respect to the variable $\tau$ and $\tau_{0}=\frac{t_{0}%
}{\varepsilon}\text{.}$

Taking
\end{subequations}
\begin{subequations}
\label{SVdP Pert}%
\begin{align}
y\left(  \tau\right)   &  =y_{0}\left(  \tau\right)  +\varepsilon
\ y_{1}\left(  \tau\right)  +\varepsilon^{2}y_{2}\left(  \tau\right)
+\cdots\label{SVdP Pert 1}\\
\text{and }z\left(  \tau\right)   &  =z_{0}\left(  \tau\right)  +\varepsilon
\ z_{1}\left(  \tau\right)  +\varepsilon^{2}z_{2}\left(  \tau\right)
+\cdots\label{SVdP Pert 2}%
\end{align}
from $\left(  \text{\ref{SVdP New}}\right)  $ we get different order relations
as
\end{subequations}
\begin{subequations}
\label{SVdP Pert Eqy}%
\begin{align}
\text{zero-th order}  &  :y_{0}^{\prime}\left(  \tau\right)  =0\qquad
\text{with }y_{0}\left(  \tau_{0}\right)  =A_{0},\label{SVdP Pert Eqy0}\\
\varepsilon\text{ order}  &  :y_{1}^{\prime}\left(  \tau\right)  =z_{0}\left(
\tau\right)  \qquad\text{with }y_{1}\left(  \tau_{0}\right)
=0,\label{SVdP Pert Eqy1}\\
\varepsilon^{2}\text{ order}  &  :y_{2}^{\prime}\left(  \tau\right)
=z_{1}\left(  \tau\right)  \qquad\text{with }y_{2}\left(  \tau_{0}\right)
=0,\label{SVdP Pert Eqy2}\\
&  \cdots\quad\cdots\quad\cdots\quad\cdots\quad\cdots\quad\cdots\nonumber
\end{align}
and
\end{subequations}
\begin{subequations}
\label{SVdP Pert Eqz}%
\begin{align}
\text{zero-th order}  &  :z_{0}^{\prime}+z_{0}\left(  y_{0}^{2}-1\right)
=-y_{0}\qquad\text{with }z_{0}\left(  \tau_{0}\right)  =B_{0}%
,\label{SVdP Pert Eqz0}\\
\varepsilon\text{ order}  &  :z_{1}^{\prime}+z_{1}\left(  y_{0}^{2}-1\right)
=-y_{1}-2y_{0}y_{1}z_{0}\qquad\text{with }z_{1}\left(  \tau_{0}\right)
=0,\label{SVdP Pert Eqz1}\\
\varepsilon^{2}\text{ order}  &  :z_{2}^{\prime}+z_{2}\left(  y_{0}%
^{2}-1\right)  =-y_{2}-z_{0}\left(  y_{1}^{2}+2y_{0}y_{2}\right)  -2y_{0}%
y_{1}z_{1}\qquad\text{with }z_{2}\left(  \tau_{0}\right)
=0.\label{SVdP Pert Eqz2}\\
&  \cdots\quad\cdots\quad\cdots\quad\cdots\quad\cdots\quad\cdots\nonumber
\end{align}
The naive perturbative solutions of $\left(  \text{\ref{SVdP Pert Eqy}%
}\right)  $ and $\left(  \text{\ref{SVdP Pert Eqz}}\right)  $ upto order
$\varepsilon$ are%
\end{subequations}
\begin{align*}
y_{0}\left(  \tau\right)   &  =A_{0},\\
y_{1}\left(  \tau\right)   &  =-\frac{1}{\left(  A_{0}^{2}-1\right)  }\left(
B_{0}+\frac{A_{0}}{A_{0}^{2}-1}\right)  e^{-\left(  A_{0}^{2}-1\right)
\left(  \tau-\tau_{0}\right)  }-\frac{A_{0}}{A_{0}^{2}-1}\left(  \tau-\tau
_{0}\right)  +\frac{1}{\left(  A_{0}^{2}-1\right)  }\left(  B_{0}+\frac{A_{0}%
}{A_{0}^{2}-1}\right)
\end{align*}
and%
\[
z_{0}\left(  \tau\right)  =\left(  B_{0}+\frac{A_{0}}{A_{0}^{2}-1}\right)
e^{-\left(  A_{0}^{2}-1\right)  \left(  \tau-\tau_{0}\right)  }-\frac{A_{0}%
}{A_{0}^{2}-1},
\]%
\begin{align*}
z_{1}\left(  \tau\right)   &  =\dfrac{2A_{0}}{\left(  A_{0}^{2}-1\right)
^{2}}\left(  B_{0}+\dfrac{A_{0}}{A_{0}^{2}-1}\right)  ^{2}e^{-\left(
A_{0}^{2}-1\right)  \left(  \tau-\tau_{0}\right)  }+\frac{A_{0}}{\left(
A_{0}^{2}-1\right)  ^{3}}\left(  1+\frac{2A_{0}^{2}}{\left(  A_{0}%
^{2}-1\right)  }\right)  e^{-\left(  A_{0}^{2}-1\right)  \left(  \tau-\tau
_{0}\right)  }\\
&  +\left(  1+\frac{2A_{0}^{2}}{\left(  A_{0}^{2}-1\right)  }\right)  \frac
{1}{\left(  A_{0}^{2}-1\right)  ^{2}}\left(  B_{0}+\frac{A_{0}}{A_{0}^{2}%
-1}\right)  e^{-\left(  A_{0}^{2}-1\right)  \left(  \tau-\tau_{0}\right)
}-\dfrac{2A_{0}}{\left(  A_{0}^{2}-1\right)  ^{2}}\left(  B_{0}+\dfrac{A_{0}%
}{A_{0}^{2}-1}\right)  ^{2}e^{-2\left(  A_{0}^{2}-1\right)  \left(  \tau
-\tau_{0}\right)  }\\
&  +\left(  1-2A_{0}B_{0}\right)  \frac{1}{\left(  A_{0}^{2}-1\right)
}\left(  B_{0}+\frac{A_{0}}{A_{0}^{2}-1}\right)  e^{-\left(  A_{0}%
^{2}-1\right)  \left(  \tau-\tau_{0}\right)  }\left(  \tau-\tau_{0}\right)
+\frac{A_{0}}{\left(  A_{0}^{2}-1\right)  ^{2}}\left(  1+\frac{2A_{0}^{2}%
}{\left(  A_{0}^{2}-1\right)  }\right)  \left(  \tau-\tau_{0}\right) \\
&  +\dfrac{A_{0}^{2}}{A_{0}^{2}-1}\left(  B_{0}+\dfrac{A_{0}}{A_{0}^{2}%
-1}\right)  e^{-\left(  A_{0}^{2}-1\right)  \left(  \tau-\tau_{0}\right)
}\left(  \tau-\tau_{0}\right)  ^{2}\\
&  -\frac{1}{\left(  A_{0}^{2}-1\right)  ^{2}}\left(  1+\frac{2A_{0}^{2}%
}{\left(  A_{0}^{2}-1\right)  }\right)  \left(  B_{0}+\frac{2A_{0}}{A_{0}%
^{2}-1}\right)  \text{.}%
\end{align*}
We apply classical RGM (See Appendix) by introducing an arbitrary intermediate
time $\lambda$ to split the time interval $\left(  \tau-\tau_{0}\right)  $ as
$\left(  \tau-\lambda\right)  +\left(  \lambda-\tau_{0}\right)  $ and take%
\begin{align*}
A_{0}  &  =Z_{1}\mathcal{A}\left(  \lambda\right) \\
\text{and }B_{0}  &  =Z_{2}\mathcal{B}\left(  \lambda\right)
\end{align*}
where
\begin{align*}
Z_{1}  &  =\sum_{n=0}^{\infty}a_{n}\left(  \tau_{0},\lambda\right)
\varepsilon^{n}=a_{0}+a_{1}\varepsilon+a_{2}\varepsilon^{2}+...\\
\text{and }Z_{2}  &  =\sum_{n=0}^{\infty}b_{n}\left(  \tau_{0},\lambda\right)
\varepsilon^{n}=b_{0}+b_{1}\varepsilon+b_{2}\varepsilon^{2}+...
\end{align*}
with%
\[
a_{0}=1,\ b_{0}=1
\]
so that
\begin{align*}
A_{0}  &  =\mathcal{A}\left(  1+a_{1}\varepsilon+...\right)  ,\\
B_{0}  &  =\mathcal{B}\left(  1+b_{1}\varepsilon+...\right)  .
\end{align*}
The divergent secular terms involving $\left(  \lambda-\tau_{0}\right)  $ are
absorbed in different orders giving the values of $a_{1}$ and $b_{1}$ etc. We
finally obtain the flow equations%
\[
\frac{d\mathcal{A}}{d\tau}=\varepsilon\left[  -\frac{\mathcal{A}}{\left(
\mathcal{A}^{2}-1\right)  }-\frac{d\mathcal{B}}{d\tau}\frac{1}{\left(
\mathcal{A}^{2}-1\right)  }+\frac{2\mathcal{AB}}{\left(  \mathcal{A}%
^{2}-1\right)  ^{2}}\frac{d\mathcal{A}}{d\tau}-\frac{d}{d\tau}\left(
\frac{\mathcal{A}}{\left(  \mathcal{A}^{2}-1\right)  ^{2}}\right)  \right]
+O\left(  \varepsilon^{2}\right)
\]
and%
\[
\frac{d\mathcal{B}}{d\tau}=\frac{d}{d\tau}\left(  \frac{\mathcal{A}}{\left(
\mathcal{A}^{2}-1\right)  }\right)  +\varepsilon\frac{\mathcal{A}}{\left(
\mathcal{A}^{2}-1\right)  ^{2}}\left(  1+2\frac{\mathcal{A}^{2}}%
{\mathcal{A}^{2}-1}\right)  +\varepsilon\frac{d}{d\tau}\left(  \frac
{1}{\left(  \mathcal{A}^{2}-1\right)  ^{2}}\left(  1+2\frac{\mathcal{A}^{2}%
}{\mathcal{A}^{2}-1}\right)  \left(  \mathcal{B}+2\frac{\mathcal{A}%
}{\mathcal{A}^{2}-1}\right)  \right)  +O\left(  \varepsilon^{2}\right)
\]
which in terms of the variable $t$ become%
\[
\frac{d\mathcal{A}}{dt}=-\frac{\mathcal{A}}{\left(  \mathcal{A}^{2}-1\right)
}+O\left(  \varepsilon\right)
\]
and%
\[
\frac{d\mathcal{B}}{dt}=\frac{d}{dt}\left(  \frac{\mathcal{A}}{\left(
\mathcal{A}^{2}-1\right)  }\right)  +\frac{\mathcal{A}}{\left(  \mathcal{A}%
^{2}-1\right)  ^{2}}\left(  1+2\frac{\mathcal{A}^{2}}{\mathcal{A}^{2}%
-1}\right)  +O\left(  \varepsilon\right)
\]
respectively. For steady state (c.f. Appendix) we must have $\frac
{d\mathcal{A}}{dt}=0$ and $\frac{d\mathcal{B}}{dt}=0$ which give
$\mathcal{A}=0$ and $\mathcal{B}=0$. So, the classical RGM fails to give any
limit cycle solution for the singular Van der Pol equation $\left(
\text{\ref{Singular VdP Eq}}\right)  $ \cite{liu_renormalization_2017}. The
failure could also be verified for the forced singular problem with some extra
computations. However, we note, in retrospect, that RGM, extended in the
context of a homotopy deformation, is strong enough to restore the desired
periodic cycles when the singular problem is transformed into a regular
perturbation problem with a large value of the nonlinearity parameter
$\bar{\varepsilon}>1$. We demonstrate this fact in following three subsections
for various choices of homotopy deformations for the original singular problem.

\subsection{Homotopy and RGM: Case 1: SRLE\label{SRLE}}

In Appendix, we detail out briefly the formal RGM computations of amplitude
and phase flow equations for a general periodically forced Rayleigh equation
for any $\varepsilon>0$. Here we consider the singular Rayleigh equation
$\left(  \text{\ref{SFRL Eqn}}\right)  $ with $\varepsilon<1$. It can be
written as%
\begin{equation}
\ddot{x}+\omega^{2}\left(  \frac{1}{3}\dot{x}^{3}-\dot{x}\right)  +\omega
^{2}x=F\cos\left(  \Omega t\right)  \label{SRLE New}%
\end{equation}
where,%
\[
\omega=\sqrt{\frac{1}{\varepsilon}}>1\text{.}%
\]
As noted already, RGM is a powerful asymptotic method that aims to improve
upon the limitations of naive perturbation expansions of periodic responses in
a nonlinear system. It is therefore imperative that the nonlinearity in the
system is weak and the associated non-linearity parameter $\varepsilon$ is
small. In order to study its limit cycle solution we consider homotopically
deformed equation%
\begin{equation}
\ddot{x}+\omega^{2}\mu\left(  \frac{1}{3}\dot{x}^{3}-\dot{x}\right)
+\omega^{2}x=F\mu\cos\left(  \Omega t\right)  \text{ for }\mu\leq1,
\label{HRL Eq}%
\end{equation}
which becomes the Rayleigh equation $\left(  \text{\ref{SRLE New}}\right)  $
for $\mu=1$. Taking%
\begin{equation}
\Omega=\omega+\mu\sigma\label{SRLE Freq}%
\end{equation}
and%
\begin{equation}
x\left(  t\right)  =x_{0}\left(  t\right)  +\mu\ x_{1}\left(  t\right)
+\mu^{2}x_{2}\left(  t\right)  +\cdots\label{SRLE Pert}%
\end{equation}
and following the steps of Appendix, we get the RG flow equations as%
\begin{equation}
\dfrac{dR}{dt}=\mu\left(  \dfrac{R}{2}-\dfrac{R^{3}}{8}\omega^{2}\right)
\omega^{2}-\dfrac{F}{2\omega}\ \mu\sin\theta+O\left(  \mu^{2}\right)
\label{SRLE RG Eq1}%
\end{equation}
and
\begin{equation}
\dfrac{d\theta}{dt}=\omega-\dfrac{F}{2R\omega}\mu\cos\theta+O\left(  \mu
^{2}\right)  \text{.} \label{SRLE RG Eq2}%
\end{equation}
The renormalized solution after taking the limit $t\rightarrow\infty$ and
eliminating the secular terms is%
\begin{equation}
x\left(  t\right)  =R\left(  t\right)  \cos\left(  \theta\left(  t\right)
\right)  +\frac{1}{96}\mu\omega^{3}R^{3}\left(  t\right)  \sin\left(
3\theta\left(  t\right)  \right)  \label{SRLE RG Sol}%
\end{equation}
where, $R\left(  t\right)  $ and $\theta\left(  t\right)  $ are obtained from
the solution of the RG flow equations. Invoking continuity of deformed system
and letting $\mu\rightarrow1^{-}$ in $\left(  \text{\ref{SRLE RG Eq1}}\right)
$, $\left(  \text{\ref{SRLE RG Eq2}}\right)  $ and $\left(
\text{\ref{SRLE RG Sol}}\right)  $ we get,%
\begin{align}
\dfrac{dR}{dt}  &  =\left(  \dfrac{R}{2}-\dfrac{R^{3}}{8}\omega^{2}\right)
\omega^{2}-\dfrac{F}{2\omega}\sin\theta\label{SRLE RG New Eq1}\\
\dfrac{d\theta}{dt}  &  =\omega-\dfrac{F}{2R\omega}\cos\theta
\label{SRLE RG New Eq2}%
\end{align}
and
\begin{equation}
x\left(  t\right)  =R\left(  t\right)  \cos\left(  \theta\left(  t\right)
\right)  +\frac{1}{96}\omega^{3}R^{3}\left(  t\right)  \sin\left(
3\theta\left(  t\right)  \right)  . \label{SRLE RG New Sol}%
\end{equation}

As, by continuity, $\mu$ attains the limiting value $\mu=1$, one gets by
writing%
\begin{equation}
\theta=\phi+\Omega\ t=\phi+\left(  \omega+\sigma\right)  t \label{SRLE Phase}%
\end{equation}
the flow equations as,%
\begin{equation}
\dfrac{dR}{dt}=\left(  \dfrac{R}{2}-\dfrac{R^{3}}{8}\omega^{2}\right)
\omega^{2}-\dfrac{F}{2\omega}\sin\left(  \phi+\Omega\ t\right)
\label{SRLE RG Revised Eq1}%
\end{equation}
and
\begin{equation}
\dfrac{d\phi}{dt}+\sigma=-\dfrac{F}{2R\omega}\cos\left(  \phi+\Omega
\ t\right)  \text{.} \label{SRLE RG Revised Eq2}%
\end{equation}
For steady state motion $\frac{dR}{dt}=0$ and $\frac{d\phi}{dt}=0$ so that
$\left(  \text{\ref{SRLE RG Revised Eq1}}\right)  $ and $\left(
\text{\ref{SRLE RG Revised Eq2}}\right)  $ give%
\[
\left(  \dfrac{R}{2}-\dfrac{R^{3}}{8}\omega^{2}\right)  \omega=\dfrac
{F}{2\omega^{2}}\sin\left(  \phi+\Omega\ t\right)
\]
and%
\[
\frac{R\sigma}{\omega}=-\dfrac{F}{2\omega^{2}}\cos\left(  \phi+\Omega
\ t\right)  .
\]
Squaring and then adding we get,%
\[
\dfrac{R^{2}\omega^{2}}{4}\left(  1-\dfrac{R^{2}}{4}\omega^{2}\right)
^{2}+\frac{R^{2}\sigma^{2}}{\omega^{2}}=\dfrac{F^{2}}{4\omega^{4}}.
\]
Taking
\[
\rho=\frac{R^{2}\omega^{2}}{4},\ \sigma_{1}=\frac{\sigma}{\omega^{2}%
}=\varepsilon\sigma\text{ and }k=\frac{F}{\omega^{2}}%
\]
we get the frequency-amplitude response equation%
\begin{equation}
\rho\left(  1-\rho\right)  ^{2}+4\rho\sigma_{1}^{2}=\dfrac{k^{2}}{4},
\label{SRLE Freq-response Eq}%
\end{equation}
which is identical to the equation $\left(  \text{\ref{Freq-Response Eq}%
}\right)  $ as described in the Appendix and so the region giving stable limit
cycle must satisfy both the inequalities%
\[
\rho>\frac{1}{2}\text{ and }\Delta>0
\]
in $\rho\sigma_{1}$ plane, where%
\[
\Delta=\frac{1}{4}\left(  1-4\rho+3\rho^{2}\right)  +\sigma_{1}^{2}\text{.}%
\]
Taking%
\begin{equation}
k=0.5,\ \omega=1.15,\ \sigma_{1}=0.07 \label{SRLE Parameter Val}%
\end{equation}
so that%
\[
F=0.5\omega^{2}%
\]
$\left(  \text{\ref{SRLE Freq-response Eq}}\right)  $ gives three values of
$\rho$ as%
\[
\rho=7.0777\times10^{-2},\ 0.74685\text{ and }1.1824\text{.}%
\]
We take the largest value $\rho=1.1824$ so that%
\[
\Delta=0.12105>0
\]
i.e., our choice of parameters in $\left(  \text{\ref{SRLE Parameter Val}%
}\right)  $ satisfy both the convergence criteria. The limit cycle from RG
solution $\left(  \text{\ref{SRLE RG New Sol}}\right)  $ is represented by
dotted line along with the numerically computed (exact) limit cycle
represented by solid line in the Figure \ref{f:1a} with the values of the
parameters given by $\left(  \text{\ref{SRLE Parameter Val}}\right)  $.

\begin{figure}[h]
\begin{center}
\centering\subfigure[]{\includegraphics[height=0.3\linewidth]{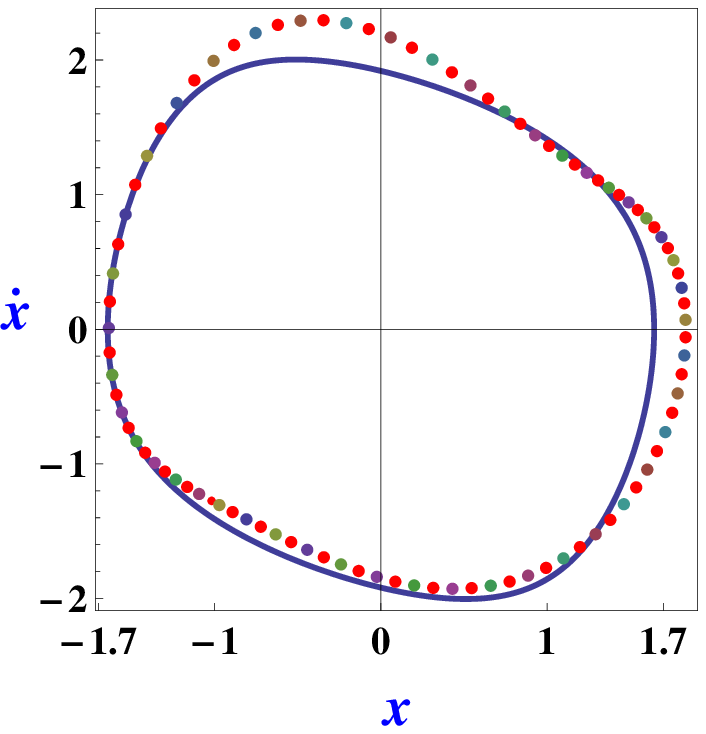}\label{f:1a}}\qquad
\qquad
\subfigure[]{\includegraphics[height=0.3\linewidth]{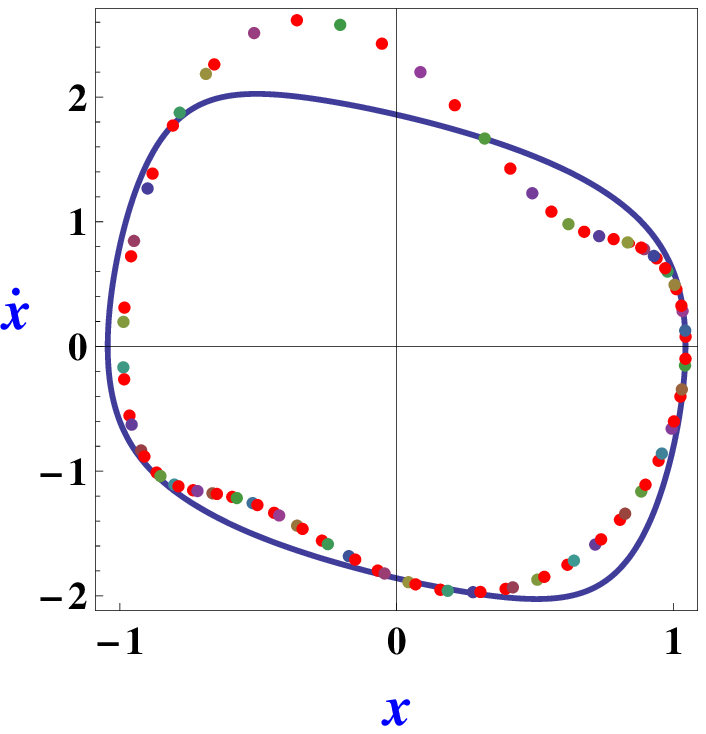}\label{f:1b}}
\end{center}
\caption{{Comparison of limit cycles given by the approximate solution
}$\left(  \text{\ref{SRLE RG New Sol}}\right)  $ using RGM (in dotted line){
with the exact numerical limit cycle (in solid line) for }$\left(  a\right)  $
$k=0.5\ \left(  \text{or }F=0.5\omega^{2}\right)  ,\ \omega=1.15\ \left(
\text{or }\varepsilon=0.75614\right)  ,\ \sigma_{1}=0.07$ and $\left(
b\right)  $ $k=0.4\ \left(  \text{or }F=0.4\omega^{2}\right)  ,\ \omega
=2\ \left(  \text{or }\varepsilon=0.25\right)  ,\ \sigma_{1}=0.01.$}%
\label{f:1}%
\end{figure}

Similarly, for%
\begin{equation}
k=0.4,\ \omega=2,\ \sigma_{1}=0.01 \label{SRLE Parameter Val2}%
\end{equation}
so that%
\[
F=0.4\omega^{2}%
\]
we have $\varepsilon=0.25$ and $\left(  \text{\ref{SRLE Freq-response Eq}%
}\right)  $ gives three values of $\rho$ as%
\[
\rho=4.3722\times10^{-2},\ 0.77347\text{ and }1.1828
\]
We take the largest value $\rho=1.1824$ so that%
\[
\Delta=0.12105>0
\]
i.e., our choice of parameters in $\left(  \text{\ref{SRLE Parameter Val2}%
}\right)  $ satisfy both the convergence criteria. The limit cycle from RG
solution $\left(  \text{\ref{SRLE RG New Sol}}\right)  $ is represented by
dotted line along with the numerically computed (exact) limit cycle
represented by solid line in the Figure \ref{f:1b} with the values of the
parameters given by $\left(  \text{\ref{SRLE Parameter Val2}}\right)  $.

To summarize, homotopically extended RGM is successful in reproducing
qualitatively reasonable relaxation oscillation solutions when the singular
problem is transformed into a regular problem but with a large value of
nonlinearity parameter. In the next subsection, we show that almost analogous
results are also recovered, with only minor differences in the explicit forms
of the approximate solutions, by implementing an alternative homotopy
variation. In subsection \ref{Reg RLE}, we consider the forced Rayleigh
problem with strong nonlinearity in the context of homotopy RGM. It is
heartening to note that the limiting homotopy RGM solutions matches exactly
with the formal RGM solution of Appendix, establishing, in retrospect, the
strength of classical RGM even yielding qualitatively reasonable periodic
solutions in a strongly nonlinear oscillation.

\subsection{Homotopy and RGM: Case 2: SRLE, Alternate homotopy
deformation\label{Sec Alt HRGM}}

Let us consider the singular Rayleigh equation $\left(  \text{\ref{SFRL Eqn}%
}\right)  $ with $\varepsilon<1$, but present in the alternative form
\begin{equation}
\varepsilon\ddot{x}(t)+\varepsilon x\left(  t\right)  +\left\{  \left(
\frac{1}{3}\dot{x}(t)^{3}-\dot{x}(t)\right)  +\left(  1-\varepsilon\right)
x(t)\right\}  =F\varepsilon\cos(\Omega t),\ \varepsilon<1\text{.}
\label{SRLE Alt Eq}%
\end{equation}
Motivation of this choice comes from the salient fact that the success of RGM
in recovering the qualitative features of periodic oscillations basically
rests on the availability of a zeroth order harmonic term in the perturbative
expansion of the desired solution. Since the non-linearity term in the above
equation is still large, we consider, instead the following homotopically
extended system to study the limit cycle solution
\begin{equation}
\varepsilon\ddot{x}(t)+\varepsilon x\left(  t\right)  +\mu\left\{  \left(
\frac{1}{3}\dot{x}(t)^{3}-\dot{x}(t)\right)  +\left(  1-\varepsilon\right)
x(t)\right\}  =F\varepsilon\mu\cos(\Omega t),\ \varepsilon<1,
\label{SRLE Alt Hom Eq}%
\end{equation}
which becomes the forced singular problem $\left(  \text{\ref{SRLE Alt Eq}%
}\right)  $ for $\mu=1$. Next, choosing, as above,
\begin{equation}
\Omega=1+\mu\sigma\text{.} \label{SRLE Alt Freq}%
\end{equation}
one expands limit cycle solution as%
\begin{equation}
x\left(  t\right)  =x_{0}\left(  t\right)  +\mu\ x_{1}\left(  t\right)
+\mu^{2}x_{2}\left(  t\right)  +... \label{SRLE Alt Pert}%
\end{equation}
so that the RG flow equations are (Appendix)%
\begin{align}
\frac{dR}{dt}  &  =\frac{1}{\varepsilon}\mu\left(  \frac{R}{2}-\frac{R^{3}}%
{8}\right)  -\frac{F}{2}\mu\sin\theta+O\left(  \mu^{2}\right)
\label{SRLE Alt RG Eq1}\\
\frac{d\theta}{dt}  &  =1-\frac{\mu}{2R\varepsilon}\left(  F\varepsilon
\cos\theta+R\left(  \varepsilon-1\right)  \right)  +O\left(  \mu^{2}\right)
\text{.} \label{SRLE Alt RG Eq2}%
\end{align}
The renormalized solution after taking the limit $t\rightarrow\infty$ and
eliminating the secular terms is%
\begin{equation}
x\left(  t\right)  =R\left(  t\right)  \cos\left(  \theta\left(  t\right)
\right)  +\frac{1}{96\varepsilon}\mu R^{3}\left(  t\right)  \sin\left(
3\theta\left(  t\right)  \right)  \text{.} \label{SRLE Alt RG Sol}%
\end{equation}
Finally, putting $\mu=1$, by continuity, in $\left(
\text{\ref{SRLE Alt RG Eq1}}\right)  $, $\left(  \text{\ref{SRLE Alt RG Eq2}%
}\right)  $ and $\left(  \text{\ref{SRLE Alt RG Sol}}\right)  $ we get,%
\begin{align}
\frac{dR}{dt}  &  =\frac{1}{\varepsilon}\left(  \frac{R}{2}-\frac{R^{3}}%
{8}\right)  -\frac{F}{2}\sin\theta\label{SRLE Alt RG New Eq1}\\
\frac{d\theta}{dt}  &  =\frac{1}{2}+\frac{1}{2\varepsilon}-\frac{1}{2R}%
F\cos\theta\label{SRLE Alt RG New Eq2}%
\end{align}
and%
\begin{equation}
x\left(  t\right)  =R\left(  t\right)  \cos\left(  \theta\left(  t\right)
\right)  +\frac{1}{96\varepsilon}R^{3}\left(  t\right)  \sin\left(
3\theta\left(  t\right)  \right)  \text{.} \label{SRLE Alt RG New Sol}%
\end{equation}
Taking%
\[
\theta=\phi+\left(  \frac{1}{2}+\frac{1}{2\varepsilon}+\sigma\right)  \ t
\]
from $\left(  \text{\ref{SRLE Alt RG New Eq1}}\right)  $ and $\left(
\text{\ref{SRLE Alt RG New Eq2}}\right)  $ we get,%
\begin{align}
\frac{dR}{dt}  &  =\frac{1}{\varepsilon}\left(  \frac{R}{2}-\frac{R^{3}}%
{8}\right)  -\frac{F}{2}\sin\left(  \phi+\left(  \frac{1}{2}+\frac
{1}{2\varepsilon}+\sigma\right)  \ t\right) \label{SRLE Alt RG Revised Eq1}\\
\frac{d\phi}{dt}+\sigma &  =-\frac{1}{2R}F\cos\left(  \phi+\left(  \frac{1}%
{2}+\frac{1}{2\varepsilon}+\sigma\right)  \ t\right)  \text{.}
\label{SRLE Alt RG Revised Eq2}%
\end{align}
For steady state motion $\frac{dR}{dt}=0$ and $\frac{d\phi}{dt}=0$ so that the
above equations give%
\begin{align*}
\left(  \frac{R}{2}-\frac{R^{3}}{8}\right)   &  =\frac{F\varepsilon}{2}%
\sin\left(  \phi+\left(  \frac{1}{2}+\frac{1}{2\varepsilon}+\sigma\right)
\ t\right) \\
R\varepsilon\sigma &  =-\frac{1}{2}F\varepsilon\cos\left(  \phi+\left(
\frac{1}{2}+\frac{1}{2\varepsilon}+\sigma\right)  \ t\right)  \text{.}%
\end{align*}
Squaring and then adding we get,%
\[
\frac{R^{2}}{4}\left(  1-\frac{R^{2}}{4}\right)  ^{2}+R^{2}\varepsilon
^{2}\sigma^{2}=\frac{F^{2}\varepsilon^{2}}{4}.
\]
Taking%
\[
\rho=\frac{R^{2}}{4},\ \sigma_{1}=\varepsilon\sigma\text{ and }k=F\varepsilon
\]
we get the frequency-amplitude response equation%
\begin{equation}
\rho\left(  1-\rho\right)  ^{2}+4\rho\sigma_{1}^{2}=\frac{k^{2}}{4},
\label{SRLE Alt Freq-response Eq}%
\end{equation}
which is identical to the equation $\left(  \text{\ref{Freq-Response Eq}%
}\right)  $ as described in the Appendix and so the region giving stable limit
cycle must satisfy both the inequalities%
\[
\rho>\frac{1}{2}\text{ and }\Delta>0
\]
in $\rho\sigma_{1}$ plane, where%
\[
\Delta=\frac{1}{4}\left(  1-4\rho+3\rho^{2}\right)  +\sigma_{1}^{2}\text{.}%
\]
Taking%
\begin{equation}
k=0.5,\ \varepsilon=0.75,\ \sigma_{1}=0.07 \label{SRLE Alt Parameter Val}%
\end{equation}
so that%
\[
F=\frac{0.5}{\varepsilon}%
\]
$\left(  \text{\ref{SRLE Alt Freq-response Eq}}\right)  $ gives three values
of $\rho$ as%
\[
\rho=7.0777\times10^{-2},\ 0.74685\text{ and }1.1824\text{.}%
\]
If we take the largest value $\rho=1.1824$ then%
\[
\Delta=0.12105>0\text{.}%
\]
Therefore, this choice of the parameter values given by $\left(
\text{\ref{SRLE Alt Parameter Val}}\right)  $ satisfy the convergence
criteria. The limit cycle from RG solution $\left(
\text{\ref{SRLE Alt RG New Sol}}\right)  $ is represented by dotted line along
with the numerically computed (exact) limit cycle represented by solid line in
the Figure \ref{f:2a} with the values of the parameters given by $\left(
\text{\ref{SRLE Alt Parameter Val}}\right)  $. \begin{figure}[h]
\begin{center}
\centering\subfigure[]{\includegraphics[height=0.3\linewidth]{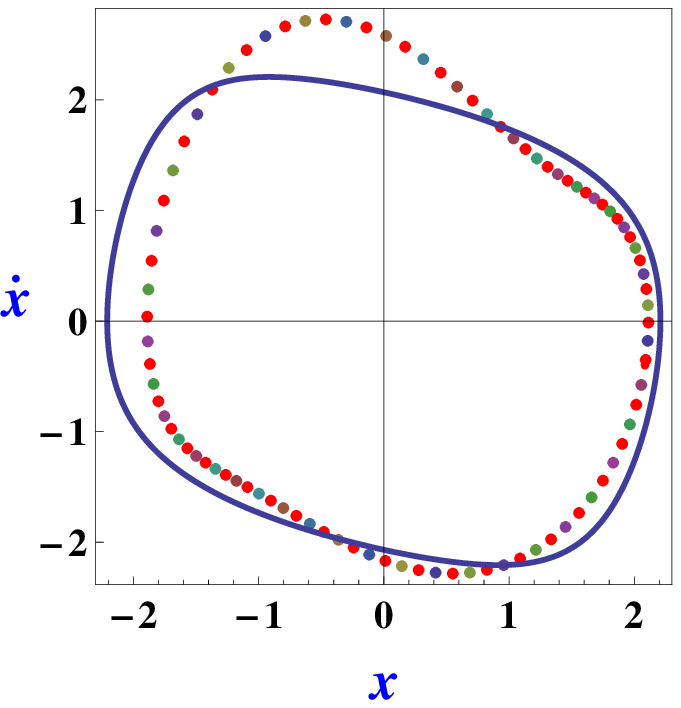}\label{f:2a}}\qquad
\qquad
\subfigure[]{\includegraphics[height=0.3\linewidth]{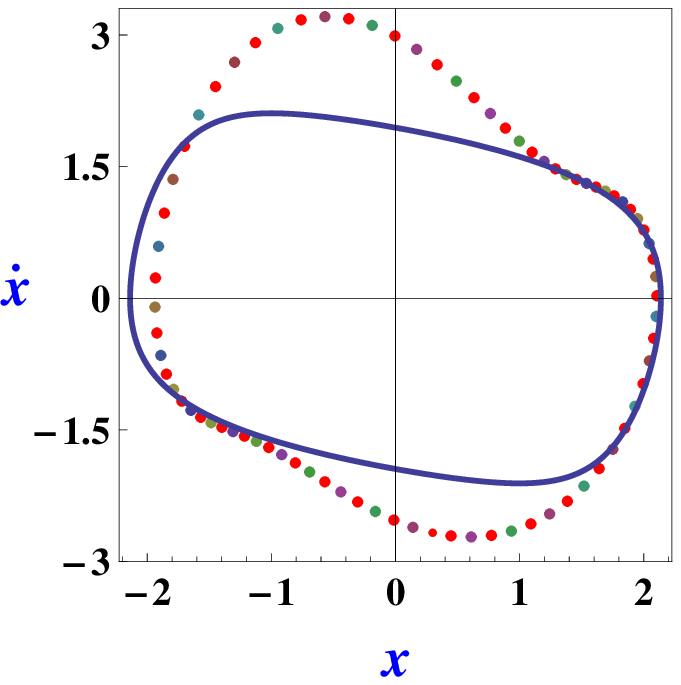}\label{f:2b}}
\end{center}
\caption{{Comparison of limit cycles given by the approximate solution
}$\left(  \text{\ref{SRLE Alt RG New Sol}}\right)  ${ using RGM (in dotted
line) with the exact numerical limit cycle (in solid line) for }$\left(
a\right)  ${ }$k=0.5,\ \varepsilon=0.75,\ \sigma_{1}=0.07$ and $\left(
b\right)  $ $k=0.4,\ \varepsilon=0.6,\ \sigma_{1}=0.07.$}%
\label{f:2}%
\end{figure}Similarly, taking%
\begin{equation}
k=0.4,\ \varepsilon=0.6,\ \sigma_{1}=0.07 \label{SRLE Alt Parameter Val2}%
\end{equation}
so that%
\[
F=\frac{0.4}{\varepsilon}%
\]
$\left(  \text{\ref{SRLE Alt Freq-response Eq}}\right)  $ gives three values
of $\rho$ as%
\[
4.2737\times10^{-2},\ 0.83109\text{ and }1.1262.
\]
Taking the largest value $\rho=1.1262$ we get%
\[
\Delta=7.9945\times10^{-2}>0
\]
i.e., this choice of the parameter values given by $\left(
\text{\ref{SRLE Alt Parameter Val2}}\right)  $ satisfy the convergence
criteria. The limit cycle from RG solution $\left(
\text{\ref{SRLE Alt RG New Sol}}\right)  $ is represented by dotted line along
with the numerically computed (exact) limit cycle represented by solid line in
the Figure \ref{f:2b} with the values of the parameters given by $\left(
\text{\ref{SRLE Alt Parameter Val2}}\right)  $.

To conclude this section, we note that two alternative choices of homotopy
deformations qualitatively almost identical periodic solutions c.f
(\ref{SRLE RG New Sol}) and (\ref{SRLE Alt RG New Sol}). Two solutions have
almost identical forms except for minor differences in coefficients of the
second terms.

\subsection{Homotopy and RGM: Case 3: Regular RLE for all $\varepsilon
$\label{Reg RLE}}

In this section we consider $\left(  \text{\ref{RFRL Eqn}}\right)  $ for all
$\varepsilon>0$. In order to study its limit cycle solution we consider the
associated homotopy equation%
\begin{equation}
\ddot{x}(t)+\varepsilon\mu\left(  \frac{1}{3}\dot{x}(t)^{3}-\dot{x}(t)\right)
+x(t)=F\epsilon\mu\cos(\Omega t), \label{RFRL New Eq}%
\end{equation}
which becomes the forced Rayleigh equation $\left(  \text{\ref{RFRL Eqn}%
}\right)  $ for $\mu=1$. Taking%
\begin{equation}
\Omega=1+\mu\varepsilon\sigma\label{RFRL Freq}%
\end{equation}
and%
\begin{equation}
x\left(  t\right)  =x_{0}\left(  t\right)  +\mu\ x_{1}\left(  t\right)
+\mu^{2}x_{2}\left(  t\right)  +... \label{RFRL Pert}%
\end{equation}
we get the RG flow equations%
\begin{align}
\frac{dR}{dt}  &  =\mu\varepsilon\left(  \frac{R}{2}-\frac{R^{3}}{8}\right)
-\frac{F}{2}\mu\varepsilon\sin\theta+O\left(  \mu^{2}\right)
\label{RFRL RG Eq1}\\
\frac{d\theta}{dt}  &  =1-\mu\frac{F}{2R}\varepsilon\cos\theta+O\left(
\mu^{2}\right)  \label{RFRL RG Eq2}%
\end{align}
and the renormalized solution after taking the limit $t\rightarrow\infty$ and
eliminating the secular terms is%
\begin{equation}
x\left(  t\right)  =R\left(  t\right)  \cos\left(  \theta\left(  t\right)
\right)  +\frac{1}{96}\varepsilon\mu R^{3}\left(  t\right)  \sin\left(
3\theta\left(  t\right)  \right)  \label{RFRL RG Sol}%
\end{equation}
where, $R\left(  t\right)  $ and $\theta\left(  t\right)  $ are the solutions
of $\left(  \text{\ref{RFRL RG Eq1}}\right)  $ and $\left(
\text{\ref{RFRL RG Eq2}}\right)  $. Putting $\mu=1$, by continuity, in
$\left(  \text{\ref{RFRL RG Eq1}}\right)  $, $\left(  \text{\ref{RFRL RG Eq2}%
}\right)  $ and $\left(  \text{\ref{RFRL RG Sol}}\right)  $ we get,%
\begin{align}
\frac{dR}{dt}  &  =\varepsilon\left(  \frac{R}{2}-\frac{R^{3}}{8}\right)
-\frac{F}{2}\varepsilon\sin\theta\label{RFRL RG New Eq1}\\
\frac{d\theta}{dt}  &  =1-\frac{F}{2R}\varepsilon\cos\theta
\label{RFRL RG New Eq2}%
\end{align}
and%
\begin{equation}
x\left(  t\right)  =R\left(  t\right)  \cos\left(  \theta\left(  t\right)
\right)  +\frac{1}{96}\varepsilon R^{3}\left(  t\right)  \sin\left(
3\theta\left(  t\right)  \right)  \text{.} \label{RFRL RG New Sol}%
\end{equation}
As $\mu=1$ taking%
\[
\theta=\phi+\Omega\ t=\phi+\left(  1+\varepsilon\sigma\right)  t
\]
in $\left(  \text{\ref{RFRL RG New Eq1}}\right)  $ and $\left(
\text{\ref{RFRL RG New Eq2}}\right)  $ we get,%
\begin{align}
\frac{dR}{dt}  &  =\varepsilon\left(  \frac{R}{2}-\frac{R^{3}}{8}\right)
-\frac{F}{2}\varepsilon\sin\left(  \phi+\Omega\ t\right)
\label{RFRL RG Revised Eq1}\\
\frac{d\phi}{dt}+\varepsilon\sigma &  =-\frac{F}{2R}\varepsilon\cos\left(
\phi+\Omega\ t\right)  \text{.} \label{RFRL RG Revised Eq2}%
\end{align}
For steady state motion $\frac{dR}{dt}=0$ and $\frac{d\phi}{dt}=0$, so that
the above equations give%
\begin{align*}
\left(  \frac{R}{2}-\frac{R^{3}}{8}\right)   &  =\frac{F}{2}\sin\theta\\
R\sigma &  =-\frac{F}{2}\cos\theta\text{.}%
\end{align*}
Squaring and then adding we get,%
\[
\frac{R^{2}}{4}\left(  1-\frac{R^{2}}{4}\right)  ^{2}+R^{2}\sigma^{2}%
=\frac{F^{2}}{4}\text{.}%
\]
Taking%
\[
\rho=\frac{R^{2}}{4}\text{ and }k=F
\]
we get the frequency-amplitude response equation as,%
\begin{equation}
\rho\left(  1-\rho\right)  ^{2}+4\rho\sigma^{2}=\frac{k^{2}}{4},
\label{RFRL Freq-response Eq}%
\end{equation}
which is again identical to the equation $\left(  \text{\ref{Freq-Response Eq}%
}\right)  $ as described in the Appendix and so the region giving stable limit
cycle must satisfy both the inequalities%
\[
\rho>\frac{1}{2}\text{ and }\Delta>0
\]
in $\rho\sigma$ plane, where%
\[
\Delta=\frac{1}{4}\left(  1-4\rho+3\rho^{2}\right)  +\sigma^{2}\text{.}%
\]
Taking%
\begin{equation}
k=F=0.5,\ \varepsilon=1.3,\ \sigma=0.07 \label{RFRL2 Parameter Val}%
\end{equation}
$\left(  \text{\ref{RFRL Freq-response Eq}}\right)  $ gives three values of
$\rho$ as%
\[
\rho=7.0777\times10^{-2},\ 0.74685,\ 1.1824\text{.}%
\]
If we take the largest value $\rho=1.1824$ then%
\[
\Delta=0.12105>0\text{.}%
\]
Therefore, this choice of the values of the parameters given by $\left(
\text{\ref{RFRL2 Parameter Val}}\right)  $ satisfy both the convergence
criteria. The limit cycle from RG solution $\left(
\text{\ref{RFRL RG New Sol}}\right)  $ is represented by dotted line along
with the numerically computed (exact) limit cycle represented by solid line in
the Figure \ref{f:3a} with the values of the parameters given by $\left(
\text{\ref{RFRL2 Parameter Val}}\right)  $.

\begin{figure}[h]
\begin{center}
\centering\subfigure[]{\includegraphics[height=0.3\linewidth]{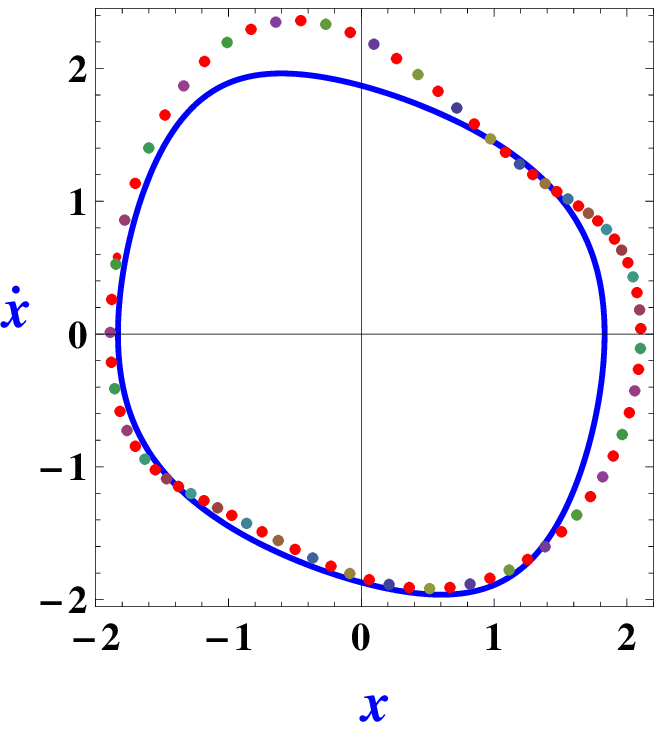}\label{f:3a}}\qquad
\qquad
\subfigure[]{\includegraphics[height=0.3\linewidth]{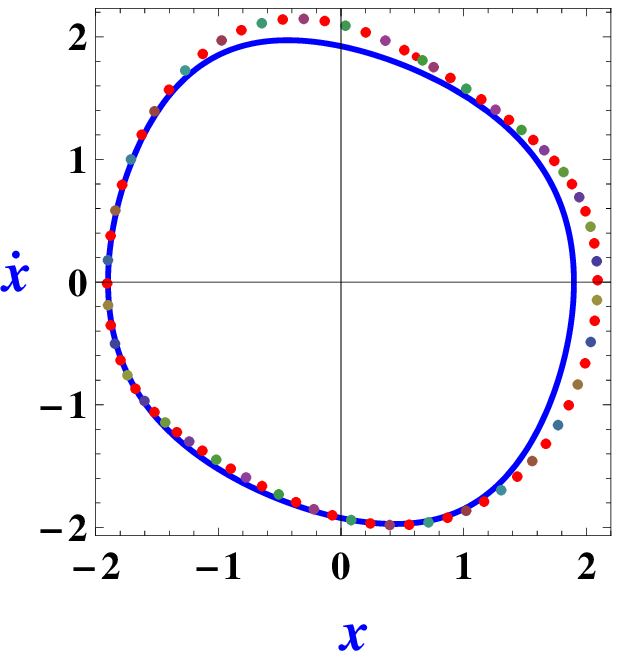}\label{f:3b}}
\end{center}
\caption{{Comparison of limit cycles given by the approximate solution
}$\left(  \text{\ref{RFRL RG New Sol}}\right)  ${ using RGM (in dotted line)
with the exact numerical limit cycle (in solid line) for }$\left(  a\right)  $
$k=F=0.5,\ \varepsilon=1.3,\ \sigma=0.07$ and $\left(  b\right)
\ k=F=0.4,\ \varepsilon=0.8,\ \sigma=0.06.$}%
\label{f:3}%
\end{figure}Similarly, taking%
\begin{equation}
k=F=0.4,\ \varepsilon=0.8,\ \sigma=0.06 \label{RFRL2 Parameter Val2}%
\end{equation}
$\left(  \text{\ref{RFRL Freq-response Eq}}\right)  $ gives three values of
$\rho$ as%
\[
4.2999\times10^{-2},\ 0.81354\text{ and }1.1435.
\]
Taking the largest value $\rho=1.1435$ we get%
\[
\Delta=9.0794\times10^{-2}>0
\]
i.e., this choice of the parameter values given by $\left(
\text{\ref{RFRL2 Parameter Val2}}\right)  $ satisfy the convergence criteria.
The limit cycle from RG solution $\left(  \text{\ref{RFRL RG New Sol}}\right)
$ is represented by dotted line along with the numerically computed (exact)
limit cycle represented by solid line in the Figure \ref{f:3b} with the values
of the parameters given by $\left(  \text{\ref{RFRL2 Parameter Val2}}\right)
$.

It follows that homotopy limit $\mu\rightarrow1^{-}$ reproduces exactly the
formal RGM solution as detailed in Appendix. As commented already, this
demonstrates the strength of classical RGM in generating approximate periodic
responses of a nonlinear system when zeroth order term happens to be
\emph{harmonic}, that agree qualitatively with the actual system response.
However, it should be clear that there indeed is a substantial mismatch (of
the order of $\sim25\%$ or more) between the exact and the theoretically
computed orbits. In the next section, we present new asymptotic results
leading to \emph{quantitatively efficient} approximations, thus improving the
conventional setting of RGM to a more efficient theory in estimating nonlinear
periodic solutions.

\section{Asymptotic Duality: IRGM and Efficient
Computation\label{Sec Asymptotic Duality}}

By asymptotic duality \cite{ds15, dss18, dsr20}, we mean an extended real
analytic framework which would support identical asymptotic (limiting) $SL(2,%
\mathbb{R}
)$ invariant (viz., translation and inversion invariant) variables, that are
allowed to vanish much more slowly compared to the original asymptotic
variable $t$ or $t^{-1}$, as the case may be, in either of the limiting
neighbourhoods $t\rightarrow0^{+}$ or $t\rightarrow\infty$ in a nonlinear system.

To be precise, a nonlinear system is generally known to enjoy a cascade of
slow scales $\tau_{n}=\varepsilon^{n}t$, so that $t$ goes to $\infty$ through
scales $\varepsilon^{-n},\ \varepsilon<1$ thus activating subdominant slower
scales $\tau_{n}\sim O(1)$ successively and hence leading to more and more
complex evolutionary structures in the system. For a strongly nonlinear
situation the relevant scales would naturally be of the form $\tau
_{n}=t/\varepsilon^{n},\ \varepsilon\geq1$ instead. Conventional asymptotic
analysis such as multiple scale method, RGM etc. are known to exploit or
invoke naturally such scales to convert the naive perturbation series into
more effective schemes capturing non-perturbative features of the nonlinear dynamics.

However, as it becomes evident from the analysis presented in Section 2,
although RGM is generally successful in capturing \emph{qualitative} features
of period orbits, fails indeed to yield \emph{quantitatively efficient}
results in strongly nonlinear problems, because of significant difference
between asymptotically approximate solutions with the exact numerical orbits.
Analogous limitations of multiple scale analysis can also be inferred. We note
that higher order calculations based on conventional treatments can not
improve the situation in a significant way, not to mention the complexity, in
under taking such computations. To eliminate this limitations in conventional
approaches, particularly in RGM, we propose the following \cite{dss18, dsr20}:

In the context of a nonlinear system the conventional \emph{linear flow} of
time as designated above by the scales of the form $\varepsilon^{-n}$ has the
potential to carry slowly varying nonlinear structures given by a
\emph{renormalized, deformed scale} of the form $t\sim\varepsilon^{-n}%
\cdot\varepsilon^{-nv(\eta)}$, where $v(\eta)\sim O(1)$ is a $SL(2,%
\mathbb{R}
)$ invariant object and $\eta\sim O(1)$ is a slow variable. To explain it more
precisely and to see what is happening let us proceed in steps:

\subsubsection*{Step 1.}

To introduce new asymptotic structure, let us begin by considering the set of
real null and divergent (either to $\infty$ or $-\infty$) sequences, denoted
$\mathcal{N}$, while $t$ being a representative divergent ( $t^{-1}$ null)
sequence. \vspace{0.2cm}

\subsubsection*{Step 2. Definitions \cite{dss18, dsr20}}

(a) The dominant characteristic scale of the nonlinear system concerned,
denoted $\varepsilon$ and assumed $\varepsilon<1$, for instance, defines a
null sequence $\varepsilon^{n}$, called the (primary) scale, for subsequent
analysis. Relative to this scale, the asymptotic variable $t\rightarrow\infty$
gets deformed images of the form
\begin{equation}
\label{pair1}t_{\pm}=\lambda_{\pm}\varepsilon^{-n}\varepsilon^{\pm nv_{\pm
}(\eta)}%
\end{equation}
satisfying the inversion rule
\begin{equation}
\label{invers}t_{+}\cdot t_{-}=\lambda(\varepsilon^{-n})^{2}.
\end{equation}
The scaling exponents $v_{\pm}(\eta)$ are, as yet unspecified real valued
function of the slow variables $\tau_{n}=\varepsilon^{n} t$, that we denote,
for brevity, by a general small scale $O(1)$ variable $\eta$ varying in
$0<<\eta<1$ or $1<\eta<<2$. It should also become clear, from above
definitions, the exponents $v_{\pm}(\eta)$ depend on the choice of the chosen
scale, this will be clarified further later. Further, out of the two variables
$t_{\pm}$, the smaller variable $t_{-}$ being an asymptotic one, $v(\eta)$
must respect the constraint $0<v_{-}(\eta)<1$. The proportionality parameters
$\lambda_{\pm}$ and $\lambda$ are generally slowly varying functions of
$\varepsilon$. Further characterization is given below. \vspace{.1cm}

(b) By definition,
\begin{equation}
v_{\pm}(\eta)=\left\vert \frac{\log t_{\pm}/\varepsilon^{-n}}{\log
\varepsilon^{n}}\right\vert +o((\log\varepsilon^{-n})^{-1}) \label{norm}%
\end{equation}
in the limit $n\rightarrow\infty$, when $\lambda_{\pm}=\mathrm{real\ const.}%
+o(1)$ or $o(1)$. Under this constraint, $\lambda$ itself may be slowly
varying constant or vanishing. Consequently, it follows from inversion rule
(\ref{invers}) that
\begin{equation}
v_{+}(\eta)=\ v_{-}(\eta) \label{selfdual}%
\end{equation}
hence, $v_{+}(\eta)$ must also respect the constraint $0<v_{+}(\eta)<1$.
Consequently, the scaling exponents are said to be \emph{self-dual} exponents
and denoted by the symbol $v(\eta)\in(0,1)$. The reason for this nomenclature
is explained by the following Proposition.

\begin{proposition}
[\textbf{Duality or }$SL(2,%
\mathbb{R}
)$\textbf{ Invariance Properties}]\cite{dss18, dsr20}\label{invprop}
\label{invariance} The scaling exponents $v_{\pm}(\eta)$, denoted for brevity
by $v(\eta)$, satisfy the following invariance properties. \newline(1)
$v(k\eta)=v(\eta)$, for any constant $k$, \newline(2) $v(\eta^{-1})=v(\eta)$,
\newline(3) $v(\eta-\eta_{0})=v(\eta)$, for any $\eta_{0}$ satisfying, for
instance, the inequality $0<|\eta_{0}|\leq|\eta|\leq1$, which are all valid up
to an additive null sequence $o(1)$.
\end{proposition}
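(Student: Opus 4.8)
The plan is to derive all three invariance properties directly from the defining formula \eqref{norm}, namely $v_{\pm}(\eta)=\bigl|\log(t_{\pm}/\varepsilon^{-n})/\log\varepsilon^{n}\bigr| + o((\log\varepsilon^{-n})^{-1})$, together with the parametrization \eqref{pair1}, $t_{\pm}=\lambda_{\pm}\varepsilon^{-n}\varepsilon^{\pm nv_{\pm}(\eta)}$, and the hypothesis that $\lambda_{\pm}$ are real constants up to $o(1)$. The key observation is that $\log(t_{\pm}/\varepsilon^{-n}) = \log\lambda_{\pm} \pm nv_{\pm}(\eta)\log\varepsilon$, so dividing by $\log\varepsilon^{n}=n\log\varepsilon$ and taking absolute values gives $v_{\pm}(\eta) = \bigl| v_{\pm}(\eta) \pm \log\lambda_{\pm}/(n\log\varepsilon)\bigr|$, and since $\log\lambda_{\pm}/(n\log\varepsilon)=o(1)$ as $n\to\infty$, the constant $\lambda_{\pm}$ contributes only an additive null sequence. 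Thus $v(\eta)$ is, up to $o(1)$, determined entirely by the leading exponent of the deformed scale, and the task reduces to checking that each of the three transformations of $\eta$ leaves that leading exponent unchanged.

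First I would treat (1), scaling $\eta\mapsto k\eta$. The point is that $\eta$ here is a representative $O(1)$ slow variable standing in for the whole family of slow scales $\tau_n=\varepsilon^n t$; rescaling it by a fixed constant $k$ merely relabels which $O(1)$ representative one picks and does not change the rate at which $t_{\pm}$ diverges relative to the primary scale $\varepsilon^{-n}$. Concretely, replacing $\eta$ by $k\eta$ multiplies $\lambda_{\pm}$ by a factor that is again a constant (absorbing any $k$-dependence), so by the reduction above $v(k\eta)=v(\eta)+o(1)$. For (3), the translation $\eta\mapsto\eta-\eta_0$ with $0<|\eta_0|\le|\eta|\le1$, the same reasoning applies: a bounded additive shift of the slow variable, staying within the admissible window, perturbs $t_{\pm}$ only by a bounded multiplicative factor, hence $\lambda_{\pm}$ only up to a constant, so $v(\eta-\eta_0)=v(\eta)+o(1)$. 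For (2), the inversion $\eta\mapsto\eta^{-1}$, I would invoke the inversion rule \eqref{invers}, $t_+\cdot t_-=\lambda(\varepsilon^{-n})^2$, which is the structural statement that sending $\eta\to\eta^{-1}$ interchanges the roles of $t_+$ and $t_-$ (or more precisely maps the deformed scale to its dual); combined with the self-duality $v_+=v_-$ established in \eqref{selfdual}, this forces $v(\eta^{-1})=v(\eta)+o(1)$.

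The main obstacle — and the place where one must be careful rather than merely compute — is making the informal phrase "$\eta$ is a representative $O(1)$ variable for the slow scales $\tau_n$" precise enough that the manipulations above are legitimate, in particular justifying that scaling or shifting $\eta$ really does only alter $\lambda_{\pm}$ by a constant (or $o(1)$) and does not secretly change the exponent $v_{\pm}$ itself. This is essentially the content of the remark in the text that "$v_{\pm}(\eta)$ depend on the choice of the chosen scale"; the clean way to handle it is to note that $v(\eta)$, being defined as a limit of logarithmic ratios as $n\to\infty$, is insensitive to any reparametrization of $\eta$ that preserves its $O(1)$ status, and that the three operations in (1)--(3) are exactly the generators (scaling, translation, inversion) of the $SL(2,\mathbb{R})$ action under which the construction was designed to be invariant. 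I would close by remarking that since each property holds only up to an additive $o(1)$, the $v(\eta)$ are well-defined as equivalence classes modulo null sequences, which is consistent with the asymptotic framework set up in Step 1.
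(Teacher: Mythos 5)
Your proposal is correct and follows essentially the same route as the paper: all three properties are read off from the logarithmic-ratio definition of $v_{\pm}(\eta)$, with constant prefactors and bounded multiplicative perturbations of $t_{\pm}$ contributing only $o(1)$ after division by $\log\varepsilon^{n}$, translation handled by factoring $t_{+}-t_{+0}=t_{+}(1-t_{+0}/t_{+})$, and inversion handled via the swap of the dual pair $t_{\pm}$ together with the simultaneous flip of the scale $\varepsilon^{\mp n}$. The only cosmetic difference is that you route (2) through the already-established self-duality relation rather than recomputing the ratio directly, which is equivalent.
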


\begin{proof}
The proof of these invariance properties, viz., (1) Scaling, (2) Inversion and
(3) Translation, follow directly from Definitions (a) and (b). However, for
completeness, we sketch here the proofs of inversion and translation
invariance, (2) and (3) respectively.

\emph{Proof of (2):} For definiteness, we work with $t_{+}:=\tau^{-1}$ (say).
Then $\tau<<1$, and the corresponding scale is $\varepsilon^{n}$. By
definition (\ref{norm}),%
\begin{equation}
v(\eta^{-1})=\left\vert \frac{\log\tau/\varepsilon^{n}}{\log\varepsilon^{n}%
}\right\vert +o(1)=\left\vert \frac{\log t_{+}/\varepsilon^{-n}}%
{\log\varepsilon^{n}}\right\vert +o(1)=v(\eta),\label{norm2}%
\end{equation}
where $\eta^{-1}\lessapprox1$ when $\eta\gtrapprox1$, proving the inversion
invariance. (Q.E.D.)

\emph{Proof of (3):} Again, by definition (\ref{norm})%
\begin{equation}
v_{\pm}(\eta-\eta_{0})=\left\vert \frac{\log\varepsilon^{n}(t_{+}-t_{+0}%
)}{\log\varepsilon^{n}}\right\vert +o(1)=\left\vert \frac{\log\varepsilon
^{n}t_{+}}{\log\varepsilon^{n}}\right\vert +\left\vert \frac{\log
(1-t_{+0}/t_{+})}{\log\varepsilon^{n}}\right\vert +o(1)=v(\eta)\label{stineq}%
\end{equation}
since the second logarithmic term in second equality vanishes faster ( here
$0<<t_{+0}<t_{+}$), proving the translation invariance. (Q.E.D.)
\end{proof}

The translation invariance also reveals the ultrametric property of the
scaling exponent $v(\eta)$. Indeed $v(\eta)$ defines an ultrametric norm on
the set of the asymptotic variables $\mathcal{N}$ satisfying the strong
triangular property: $v(\eta+\eta_{0})\leq\max\{v(\eta),v(\eta_{0})\}$, $\eta
$, $\eta_{0}$ satisfying above inequalities. For a detail proof see
\cite{dss18,dsr20}.

Let us remark that the mapping $t\mapsto v(\eta)$ can be interpreted as a
renormalization group action, assigning a finite slowly varying value
$v(\eta)$ to a diverging or null sequence (i.e. $t$ or $t^{-1}$). The scaling
exponent $v(\eta)$ could therefore be referred either as the renormalized
scaling exponent or the deformation scaling factor. \vspace{.2cm}

\subsubsection*{Step 3. Self-dual vis-$\acute{a}$-vis Trivial Ultrametric
\vspace{0.1cm}}

The stronger triangular inequality satisfied by the self dual exponent
$v(\eta)$ tells that $v(\eta)$ acts on $\mathcal{N}$ as an ultrametric norm.
Under such ultrametric norm the asymptotic set $\mathcal{N}$ naturally has the
structure of a totally disconnected set, while the value set of the norm
$v(\eta)$ can at most be countable \cite{nonarc}. The particular norm that
specifies a constant $O(1)$ value to each element of $\mathcal{N}$ is the
trivial ultrametric. It also follows in the present context that this trivial
norm, by its very definition, is self dual. This ultrametric norm is constant
over the set $\mathcal{N}$, but could indeed be a function on slow variables
characteristic of the given nonlinear system. Consequently, the self dual norm
$v(\eta)$ is a trivial ultrametric norm over $\mathcal{N}$ when the
proportionality parameter $\lambda$ in the inversion rule (\ref{invers}) is a
slowly varying constant.

Incidentally, we remark that depending on the choice of a scale $\delta$
(say), the self dual exponent $v(\eta)$ could be extended to more general
duality relations, viz., (1) weakly self dual (or simply dual) exponents
$v_{+}(\eta)=\alpha(\delta)v_{-}(\eta),\ \alpha(\delta)>0$, when $\lambda$ in
(\ref{invers}) is slowly varying as $O((\log\delta)^{k})$, $k$ a constant, and
strictly dual when $v_{+}(\eta)v_{-}(\eta)=\mu(\delta),\mu>0$ and $\lambda$ in
(\ref{invers}) is slowly varying as $O(\delta^{\kappa})$, $\kappa$ a constant
\cite{dsr20}. Parameters such $\alpha$, $\mu$ etc. appearing in above
definitions can also be slowly varying following various possible patterns
that could be discerned in a specific nonlinear problem. Possible applications
of weakly dual and strictly dual cases would be considered in the context of
period doubling bifurcations in nonlinear oscillations in subsequent works.
\vspace{.2cm}

\subsubsection*{Step 4. Analytic Function Space \vspace{0.1cm}}

The space of periodic solutions of nonlinear oscillations consists, in
general, of analytic functions. One needs to specify the action of the
ultrametric scaling exponents on the space of analytic functions. This would
later facilitate one to improve upon the RGM approximate solutions of limit
cycle solutions to an efficient computation.

Let $(x(t),y(t)),\ y(t)=\dot{x}(t)$ be the exact phase space solution of the
nonlinear periodic orbit and $(x_{0}(t),y_{0}(t))$ be the corresponding RGM
approximation. Recalling from Section 2 the fact that an approximate periodic
solution is retrieved from naive perturbation series following RG algorithms
in the limit $t\rightarrow\infty$ through linear scales of $\varepsilon^{-n}$,
when amplitude and phase flow equations attain stationary condition. Let
$t_{0}=k\varepsilon^{-n}$ be a sufficiently large time instant. In a numerical
computation one usually replaces $\infty$ by $t_{0}$ for $n$ sufficiently large.

In the context of the so-called IRGM, we introduce nonlinear deformation in
the above linear flow of time through self dual scaling exponent $v(\eta)$ and
replace the large time instance $t_{0}$ instead by $T_{0}=t_{0}\cdot
t_{0}^{-v(\eta)}$. For a sufficiently small, but nevertheless $O(1)$,
$v(\eta)$, the linearized deformed time has the form $T_{0}=t_{0}-v(\eta
)t_{0}\log t_{0}:=t_{0}+\chi(\eta)$.

Now, there is a caveat here. We make an \emph{important} comment on the nature
of \emph{time scale deformations} in the context of a dynamical system.
\vspace{.2cm}

\subsubsection*{Step 4.1. Dynamical time scale \vspace{0.1cm}}

A nonlinear system generally involves multiple \emph{characteristic} scales. A
periodic oscillation in a slow-fast system such as singularly perturbed
Rayleigh equation is characterized by the presence of vary fast and slow rates
of motion along two independent phase space directions
\cite{jordan_nonlinear_1999, Xu-Luo-2019}. Consequently, the deformation
factor, considered for the time scale $t_{0}$, viz., $t_{0}^{-v(\eta)}$
involving only one deformation exponent $v(\eta)$ is too restrictive, and
frozen, so to speak, in character. To allow for a more flexible and dynamic
character accommodating different rates of evolution, one may introduce, in
the following, more general \emph{dynamical deformation scales} given by
$\sigma(t_{0}):=t_{0}^{-1/\psi(t_{0})}$, so that the {scale dependent} $SL(2,%
\mathbb{R}
)$ deformation exponent is now given by
\begin{equation}
v_{\psi}(\eta)=\left\vert \frac{\log T_{0}/t_{0}}{\log\sigma}\right\vert +O(1)
\label{dynnorm}%
\end{equation}
where $\psi(t_{0})\geq0$ is, in general, an analytic function, depending only
on the phase space variables $x,\ y,\ldots$. The deformation scale is dynamic,
as the factor $\sigma$ accommodates naturally any critical behaviour that a
system may experience, by dynamically adjusting or expanding the original
scale $t_{0}$ to a more larger scale, specific to the system behaviour, close
to a critical point (set) satisfying $\dot{x}=0$, or $\dot{y}=0$ etc. In the
present nonlinear oscillation problem (slow-fast system), as detailed in the
following, we choose to work with $\psi_{x}:=k_{x}^{-1}|\dot{x}_{0}|t_{0}\log
t_{0}$ and $\psi_{y}:=k_{y}^{-1}|\dot{y}_{0}|t_{0}\log t_{0}$, for some
positive constants $k_{i},\ i=x,y$, where $x_{0}$ and $y_{0}$ are the RGM
approximated solutions of $x$ and $y$.

The above choice is guided by the above mentioned property of slow fast
systems involving different rates of oscillatory motion. As it turns out, such
choices along $x$ and $y$ directions could successfully yield very high level
of precision computations, exceeding $98\%$ or above accuracy level. More
interestingly, one expects a sort of visual (graphical) verification of slow
fast action of the the dynamic scales in the form of a \emph{condensation and
rarefaction phenomenon} in the distribution of data points along the fast and
slow phase paths respectively, in a numerical application of the method in a
relaxation oscillation. In the next section, we indeed get explicit
verifications of this sort of condensation and rarefaction phenomenon in the
phase diagrams of periodic orbits. One notes also that the corresponding
nonlinear, dynamically deformed time, denoted as $T_{0}^{\psi}$, has the form
\begin{equation}
T_{0}^{\psi}=t_{0}-v_{\psi}(\eta)t_{0}\log\sigma(t_{0}): =t_{0}+\frac{v_{\psi
}(\eta)}{\psi(t_{0})}t_{0}\log t_{0}:=t_{0}+\chi_{\psi}(t_{0},\eta).
\end{equation}

Next, turning back to our discussion of analytic functions, we replace
approximate solution $(x_{0}(t),y_{0}(t))$ by corresponding dynamically
deformed solution $\tilde x_{0}=x_{0}(t_{0}+\chi_{x}(t_{0},\eta))$ and $\tilde
y_{0}=y_{0}(t_{0}+\chi_{y}(t_{0},\eta))$, where $\chi_{i}(t_{0},\eta
)=\frac{v_{\psi}(\eta)}{\psi_{i}(t_{0})}t_{0}\log t_{0}, \ i=x,y$. Assuming
analyticity of the solutions one then gets the linearized solutions as%

\[
\tilde x_{0}=x_{0}(t_{0}) +\dot x_{0}(t_{0})\chi_{x}(t_{0},\eta), \ \tilde
y_{0}=y_{0}(t_{0}) +\dot y_{0}(t_{0})\chi_{y}(t_{0},\eta).
\]

The precise mismatch between the exact and approximated solutions in IRGM is
now estimated as%

\begin{align}
x(t_{0})-x_{0}(t_{0})  &  =\dot{x}_{0}(t_{0})\chi_{x}(\eta)=k_{x}v_{x}%
(\eta),\label{error}\\
y(t_{0})-y_{0}(t_{0})  &  =\dot{y}_{0}(t_{0})\chi_{y}(\eta)=k_{y}v_{y}(\eta)
\label{error in y}%
\end{align}
when we fix the dynamical scales along $x$ and $y$ directions as $\psi
_{x}:=k_{x}^{-1}|\dot{x}_{0}|t_{0}\log t_{0}$ and $\psi_{y}:=k_{y}^{-1}%
|\dot{y}_{0}|t_{0}\log t_{0}$ respectively, for some positive adjustable
constants $k_{x}$ and $k_{y}$.

We next invoke analyticity of the scaling exponents $v_{x}$ and $v_{y}$ and
then may approximate $v_{i}(\eta), \ i=x,y$ by polynomials $v_{i}(\eta
)=\sum_{1}^{m}a_{ij}\eta^{j}, \ i=x,y$. One then specifies the constants
$a_{ij}$ and free parameters $k_{i}$ so as to obtain best fit approximations
of $v_{i}(\eta)$, leading to high precision periodic solution by IRGM.
However, as it will become evident, single piece polynomial approximation may
not be sufficient to yield efficient matching of orbits. One indeed requires
multi-component polynomial functions for $v_{i}$s over a complete period of
oscillation, to yield an efficient computation. Since we are presenting our
computational results using the state of art computational platform of
Mathematica, the question of assuring smoothness at the boundary (branch)
points where two best fitted polynomials are supposed to match smoothly, so as
to yield a smooth function on the entire period of the periodic oscillation
concerned, could not be tackled in the present paper. To address this problem
fully, one perhaps needs to invoke independent code, that we leave for future
work, though, theoretically, determining smoothly matched best fit polynomial
curves across boundary points should not be difficult with so many available
adjustable parameters. We close this section with the following observation.

The multi-component polynomial best fit estimations of correction terms
$V_{i}(\eta)$ in (\ref{error}) and (\ref{error in y}) correspond to best
possible piece-wise analytic approximations of the residual periodic functions
$x(t_{0})-x_{0}(t_{0})$ and $y(t_{0})-y_{0}(t_{0})$ respectively, that arise
from inaccurate approximations of exact solutions in the conventional
asymptotic techniques. The identification and efficient estimations of these
residual inaccuracies may be considered as significant advances in the context
of the present formalism.

\section{Examples: Efficient estimation of relaxation
cycles\label{Sec Examples}}

To establish and demonstrate the significance of IRGM accommodating novel
asymptotic structures, we now present the main steps of numerical
calculations, with corresponding graphical representations, for Example (A)
singularly perturbed Rayleigh equation $\left(  \text{\ref{SFRL Eqn}}\right)
$ and Example (B) ( regular) Rayleigh equation $\left(  \text{\ref{RFRL Eqn}%
}\right)  $ with a large nonlinearity parameter $\varepsilon>1$. We refer to
Section \ref{SRLE} and \ref{Reg RLE} for the homotopy aided RGM computed
solutions of these two cases.

\subsubsection*{\textbf{Example A}}

The RG solution after putting $\mu=1$ for the singular equation $\left(
\text{\ref{SFRL Eqn}}\right)  $ is given by $\left(  \text{\ref{SRLE RG Sol}%
}\right)  $. In Figure \ref{f:1b} we see that the exact limit cycle and its
approximation by RGM are significantly different (i.e. accuracy not exceeding
$75\%$) for $F=0.4,$\ $\omega=2\ \left(  \text{or }\varepsilon=0.25\right)  $,
$\sigma=0.01\omega^{2}$.

To improve upon the above limitations, we first fix slower $O(1)$ variable
$\eta$ that is supposed to appear in the scaling exponents of (\ref{error}).
For $\varepsilon=0.25$, by observation, one finds a complete oscillation of
the relaxation cycle concerned in the time interval $190<t<200$, and the
\emph{minimum} value of $n$ so that $\eta=\varepsilon^{n}t$ falls in $(0,1)$
is $n=4$. Indeed, one verifies that for complete period $190<t<200$, the slow
variable $\eta$ satisfies $0.7421875<\eta<0.78125$ for $n=4$.

Next, one notes a phase difference between the numerically computed solution
$x(t)$ and the RGM solution $x_{0}(t)$ in the said interval (Figure
\ref{f:4a}). \begin{figure}[ptb]
\begin{center}
\centering%
\begin{tabular}
[c]{rcr}%
\subfigure[]{\includegraphics[height=0.3\linewidth,width=0.41\linewidth]{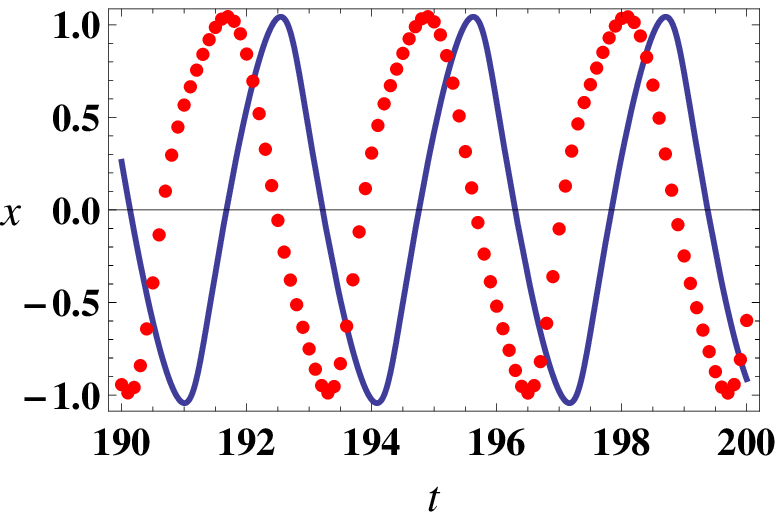}\label{f:4a}} &
\qquad\qquad &
\subfigure[]{\includegraphics[height=0.3\linewidth,width=0.42\linewidth]{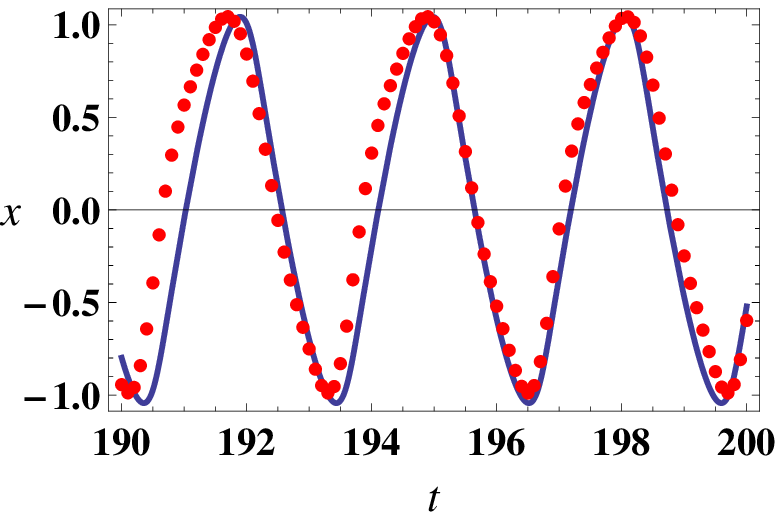}\label{f:4b}}\\
\subfigure[]{\includegraphics[height=0.3\linewidth]{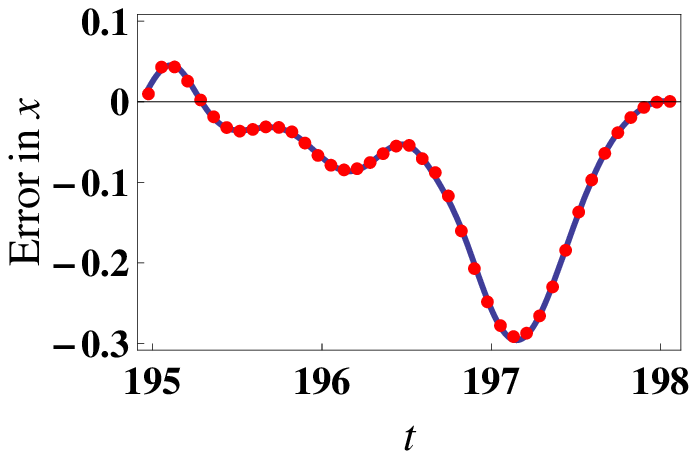}\label{f:4c}} &
\qquad\qquad &
\subfigure[]{\includegraphics[height=0.3\linewidth]{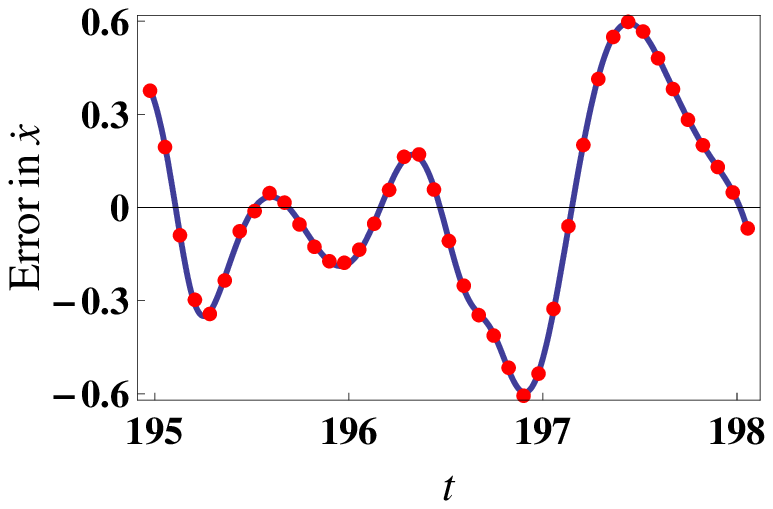}\label{f:4d}}\\
\subfigure[]{\includegraphics[height=0.3\linewidth,width=0.43\linewidth]{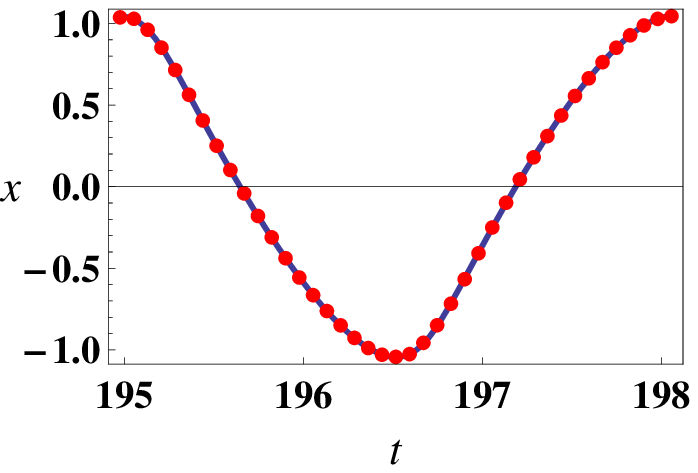}\label{f:4e}} &
\qquad\qquad &
\subfigure[]{\includegraphics[height=0.3\linewidth,width=0.42\linewidth]{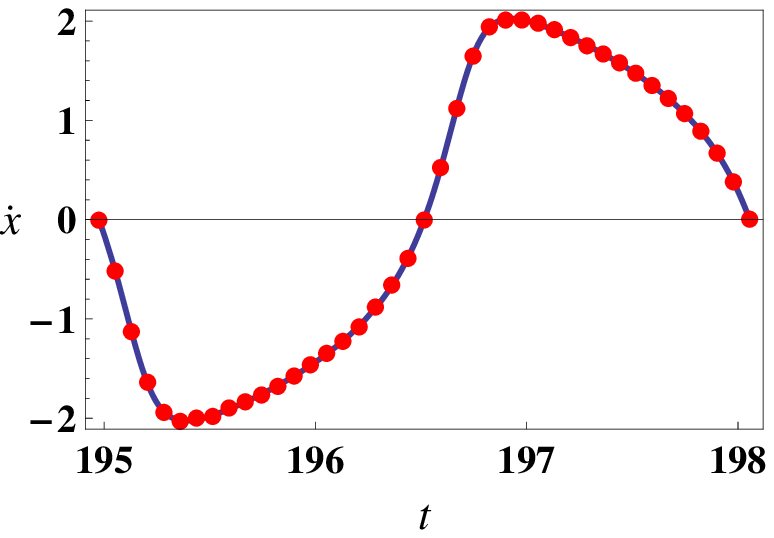}\label{f:4f}}
\end{tabular}
\end{center}
\caption{Improving the approximation shown in Figure \ref{f:1b} discussed in
Section \ref{SRLE}. Here \textbf{Solid} lines represent exact solutions and
\textbf{Dotted} lines represent approximate solutions.\medskip\newline$\left(
a\right)  $ Comparison of the exact solution $x\left(  t\right)  $ of $\left(
\text{\ref{SFRL Eqn}}\right)  $ and the RGM approximation given by $\left(
\text{\ref{SRLE RG Sol}}\right)  $ for $190<t<200.$ $\left(  b\right)  $
Comparison of $x\left(  t+0.65\right)  $ and the RGM approximation given by
$\left(  \text{\ref{SRLE RG Sol}}\right)  $ showing better agreement for
$190<t<200$. $\left(  c\right)  $ Comparison of exact error $x\left(
t+0.65\right)  -x_{0}\left(  t\right)  $ and its estimate of $\psi_{x}\left(
t\right)  $ given by $\left(  \text{\ref{Error1 Estimate in x}}\right)  $ in
$\left[  r_{1},r_{3}\right]  $. $\left(  d\right)  $ Comparison of exact error
$\dot{x}\left(  t+0.65\right)  -\dot{x}_{0}\left(  t\right)  $ and its
estimate of $\psi_{y}\left(  t\right)  $ given by $\left(
\text{\ref{Error1 Estimate in y}}\right)  $ in $\left[  r_{1},r_{3}\right]  $.
$\left(  e\right)  $ Comparison of $x\left(  t+0.65\right)  $ and its estimate
$x_{0}\left(  t\right)  +\psi_{x}\left(  t\right)  $. $\left(  f\right)  $
Comparison of $\dot{x}\left(  t+0.65\right)  $ and its estimate $\dot{x}%
_{0}\left(  t\right)  +\psi_{y}\left(  t\right)  $.}%
\label{f:4}%
\end{figure}To compensate the phase difference, that is essential for
determination of numerical accuracy of the technique, we verify that a
suitable time translated $x(t)\rightarrow x(t+0.65)$ have approximately
identical phase with $x_{0}(t)$, viz., (Figure \ref{f:4b})
\[
x\left(  t+0.65\right)  \approx x_{0}\left(  t\right)  ,
\]
in\ phase. It turns out that this time translation along $x$ direction is
inherited by $y$ curves, viz., $y(t+0.65)$ and $y_{0}(t)$, both having almost
identical phases. This suffices us to replace $\left(  \text{\ref{error}%
}\right)  $ and $\left(  \text{\ref{error in y}}\right)  $ by new phase
shifted error equations
\begin{align}
x(t+0.65)-x_{0}(t)  &  =\dot{x}_{0}(t)\chi_{x}(\eta)=k_{x}v_{x}(\eta
)\equiv\psi_{x}(t),\label{error1}\\
y(t+0.65)-y_{0}(t)  &  =\dot{y}_{0}(t)\chi_{y}(\eta)=k_{y}v_{y}(\eta
)\equiv\psi_{y}(t), \label{error2 in y}%
\end{align}
where $\eta=\varepsilon^{n}t$ for $190<t<200,\ \varepsilon=0.25,\ n=4$ using
best fit polynomial curves as follows.

In order to identify the period of the oscillation, we find that $x\left(
t+0.65\right)  $ $\left(  \text{not }x_{0}\left(  t\right)  \right)  $ has
successive maximum at the points $r_{1}=194.975,\ r_{3}=198.055$ and it has a
minimum in between them at $r_{2}=196.515$ in the interval $190<t<200$. Thus,
the period of the function $x\left(  t+0.65\right)  $ is $T=r_{3}%
-r_{1}=198.055-194.975=3.08\text{. }$The graph of $x\left(  t+0.65\right)
-x_{0}\left(  t\right)  $ is shown in Figure \ref{f:4c} on $\left[
r_{1},r_{3}\right]  $ through solid line. It is difficult to estimate the
above function using a single polynomial. So, by careful observation we divide
the interval $\left[  r_{1},r_{3}\right]  $ into subintervals
\[
\left[  r_{1},195.7\right]  ,\ \left[  195.7,196.6\right]  ,\ \left[
196.6,197.6\right]  ,\ \left[  197.6,r_{3}\right]
\]
and then estimate them using $100$ equidistant data points from each
subintervals through Mathematica by the piecewise function (expected to be
continuous and smooth, but yet to be verified (c.f. Section 3, item 4.1,
remarks on error estimation))\begingroup\scalefont{1.0}%

\begin{equation}
\psi_{x}\left(  t\right)  =\left\{
\begin{array}
[c]{ll}%
-6.08565\times10^{9}+1.2456\times10^{8}t-956058.t^{2}+3261.41t^{3}%
-4.17212t^{4}, & r_{1}\leq t\leq195.7\\
-3.44505\times10^{9}+7.02609\times10^{7}t-537357.t^{2}+1826.54t^{3}%
-2.32822t^{4}, & 195.7<t\leq196.6\\
-4.11615\times10^{9}+8.35121\times10^{7}t-635386.t^{2}+2148.53t^{3}%
-2.72444t^{4}, & 196.6<t\leq197.6\\
-563206+5648.14t+0.153819t^{2}-0.143543t^{3}+0.000359855t^{4}, & 197.6<t\leq
r_{3}%
\end{array}
\right.  \label{Error1 Estimate in x}%
\end{equation}
\endgroup the graph of which is shown in the Figure \ref{f:4c} through dotted
line. Similarly,
\begingroup\scalefont{0.8}%
\begin{equation}
\psi_{y}\left(  t\right)  =\left\{
\begin{array}
[c]{ll}%
-8.09646\times10^{10}+1.24433\times10^{9}t-4.25298\times10^{6}t^{2}%
-21721.8t^{3}+167.046t^{4}-0.285255t^{5}, & r_{1}\leq t\leq195.5\\
-9.50313\times10^{9}+1.45469\times10^{8}t-494748.t^{2}-2526.17t^{3}%
+19.3312t^{4}-0.0328788t^{5}, & 195.5<t\leq196.3\\
-4.16274\times10^{10}+6.35172\times10^{8}t-2.15314\times10^{6}t^{2}%
-10963.2t^{3}+83.6184t^{4}-0.141766t^{5}, & 196.3<t\leq196.85\\
1.5638\times10^{10}-2.37679\times10^{8}t+802502.t^{2}+4071.01t^{3}%
-30.9271t^{4}+0.0522281t^{5}, & 196.85<t\leq197.7\\
3.24289\times10^{7}-327797.t+0.102438t^{2}+8.37216t^{3}-0.0211569t^{4}, &
197.7<t\leq r_{3}%
\end{array}
\right.  \label{Error1 Estimate in y}%
\end{equation}%
\endgroup
the graph of which is compared with the exact error $\dot{x}\left(
t+0.65\right)  -\dot{x}_{0}\left(  t\right)  $ in the Figure \ref{f:4d}. The
graphs of $x\left(  t+0.65\right)  $, $\dot{x}\left(  t+0.65\right)  $ and
their respective estimates $x_{0}\left(  t\right)  +\psi_{x}\left(  t\right)
$, $\dot{x}_{0}\left(  t\right)  +\psi_{y}\left(  t\right)  $ are shown in
Figure \ref{f:4e} and Figure \ref{f:4f} respectively.

To estimate the percentage error in above numerical calculations we cannot use
the standard error formula, for instance, in $x$ direction
\[
\max\frac{x\left(  t+0.65\right)  -\left(  x_{0}\left(  t\right)  +\psi
_{x}\left(  t\right)  \right)  }{x\left(  t+0.65\right)  }\times100\text{ for
}r_{1}\leq t\leq r_{3}%
\]
for the fact that $x\left(  t+0.65\right)  $ may vanish in $\left[
r_{1},r_{3}\right]  $. So we estimate the error with respect to the amplitude
$\tilde{a}$ of $x$ using the formula%
\begin{equation}
E_{x}=\max\frac{x\left(  t+0.65\right)  -\left(  x_{0}\left(  t\right)
+\psi_{x}\left(  t\right)  \right)  }{\tilde{a}}\times100\text{ for }r_{1}\leq
t\leq r_{3}. \label{effx}%
\end{equation}
We find that%
\[
\tilde{a}=\max_{r_{1}\leq t\leq r_{3}}\left\vert x\left(  t+0.65\right)
\right\vert =1.04367\text{ and }\max_{r_{1}\leq t\leq r_{3}}\left\{  x\left(
t+0.65\right)  -\left(  x_{0}\left(  t\right)  +\psi_{x}\left(  t\right)
\right)  \right\}  =0.0127652
\]
so that%
\[
E_{x}=\frac{0.0127652}{1.04367}\times100=1.22\%
\]
and hence we achieve an accuracy of $\left(  100-E_{x}=\right)  \ 98.78\%$ in
this estimation.

Similarly, for estimation of $y\left(  t+0.65\right)  =\dot{x}\left(
t+0.65\right)  $ by $\dot{x}_{0}\left(  t\right)  +\psi_{y}\left(  t\right)  $
we use the formula for percentage error as%
\begin{equation}
E_{y}=\max\frac{\dot{x}\left(  t+0.65\right)  -\left(  \dot{x}_{0}\left(
t\right)  +\psi_{y}\left(  t\right)  \right)  }{\tilde{b}}\times100\text{ for
}r_{1}\leq t\leq r_{3}, \label{effy}%
\end{equation}
where $\tilde{b}$ is the amplitude of $\dot{x}\left(  t+0.65\right)  $ in
$\left[  r_{1},r_{3}\right]  $ i.e.,
\[
\tilde{b}=\max_{r_{1}\leq t\leq r_{3}}\left\vert \dot{x}\left(  t+0.65\right)
\right\vert \text{.}%
\]
Here,%
\[
\tilde{b}=2.02533\text{ and }\max_{r_{1}\leq t\leq r_{3}}\left\{  \dot
{x}\left(  t+0.65\right)  -\left(  \dot{x}_{0}\left(  t\right)  +\psi
_{y}\left(  t\right)  \right)  \right\}  =0.0311104
\]
so that%
\[
E_{y}=\frac{0.0311104}{2.02533}\times100=1.54\%
\]
and hence we achieve an accuracy of $\left(  100-E_{y}=\right)  \ 98.46\%$ in
this estimation. The estimated limit cycle with $x_{0}\left(  t\right)
+\psi_{x}\left(  t\right)  $ and $\dot{x}_{0}\left(  t\right)  +\psi
_{y}\left(  t\right)  $ is compared with exact limit cycle in Figure
\ref{f:5}. \begin{figure}[ptb]
\begin{center}
\includegraphics[width=0.3\linewidth,width=0.3\linewidth]{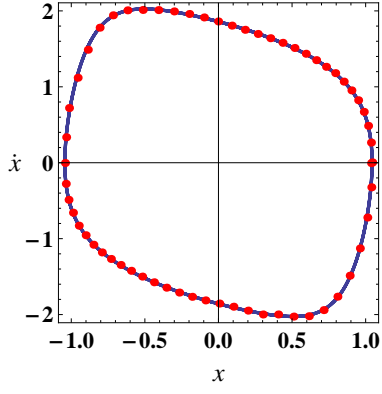}
\end{center}
\caption{The solid line represents numerically computed exact limit cycle as
discussed in Section \ref{SRLE} and the dotted line represents the estimated
limit cycle by $x_{0}\left(  t\right)  +\psi_{x}\left(  t\right)  $ and
$\dot{x}_{0}\left(  t\right)  +\psi_{y}\left(  t\right)  $.}%
\label{f:5}%
\end{figure}\vspace{0.2cm}

\subsubsection*{\textbf{Example B}}

Next, we present efficient computations in the context of the Rayleigh
Equation with large nonlinearity $\varepsilon>1$. The homotopy RG solution of
Section \ref{Reg RLE}, after putting $\mu=1$, is given by $\left(
\text{\ref{RFRL RG New Sol}}\right)  $. As usual, the exact limit cycle and
its approximation by homotopy RGM (Figure \ref{f:3a}) are significantly
different (accuracy level being less than $75\%$) for $F=0.5,$\ $\varepsilon
=1.3$, $\sigma=0.07$.

Now, following the steps of (A), for a complete cycle we choose $185<t<200$,
so that the minimum value of $n$ is now fixed at $n=21$ and the slow variable
$\eta=\varepsilon^{-21}t\in\left(  0,1\right)  $, for $\varepsilon=1.3$. In
order to remove relative phase difference between exact and approximated
solution, one time translates the exact solution $x(t)\rightarrow x(t+2.7)$,
and consider the phase corrected error equations in the form
\begin{align}
x(t+2.7)-x_{0}(t)  &  =\dot{x}_{0}(t)\chi_{x}(\eta)=k_{x}v_{x}(\eta
)\equiv\tilde{v}_{x}(\eta)=\psi_{x}(t),\label{error2}\\
y(t+2.7)-y_{0}(t)  &  =\dot{y}_{0}(t)\chi_{y}(\eta)=k_{y}v_{y}(\eta
)\equiv\tilde{v}_{y}(\eta)=\psi_{y}(t),
\end{align}
where $\eta=\varepsilon^{-21}t$ for $185<t<200,\ \varepsilon=1.3,$ using
multi-component polynomial curve fitting. As before, we estimate the period of
$x\left(  t+2.7\right)  $ as $T=r_{3}-r_{1}=5.759$, where the maxima of the
oscillation are attained at $r_{1}=190.507,\ r_{3}=196.266$, when the minimum
value is at $r_{2}=193.387$ for $185<t<200$.

Thereafter, we use here three piece polynomial fitting to annul the error
difference in (\ref{error2}) as \begingroup\scalefont{0.9}%
\begin{equation}
\psi_{x}\left(  t\right)  =\left\{
\begin{array}
[c]{ll}%
-7.29799\times10^{7}+1.51571\times10^{6}t-11804.4t^{2}+40.8578t^{3}%
-0.0530298t^{4}, & r_{1}\leq t\leq191.9\\
-1.42254\times10^{8}+2.9514\times10^{6}t-22962.5t^{2}+79.4005t^{3}%
-0.102957t^{4}, & 191.9<t\leq194\\
-2.38474\times10^{8}+4.893\times10^{6}t-37647.7t^{2}+128.74t^{3}%
-0.165089t^{4}, & 194<t\leq r_{3}.
\end{array}
\right.  \label{Error2 Estimate in x}%
\end{equation}
\endgroup Similarly, one estimates $\dot{x}\left(  t+2.7\right)  $ using
$\dot{x}_{0}\left(  t\right)  +\psi_{y}\left(  t\right)  $ by dividing the
interval $\left[  r_{1},r_{3}\right]  $ instead in four subintervals and then
estimated $\dot{x}\left(  t+2.7\right)  -\dot{x}_{0}\left(  t\right)  $ by the
piecewise function\begingroup\scalefont{0.9}%
\begin{equation}
\psi_{y}\left(  t\right)  =\left\{
\begin{array}
[c]{cl}%
-3.48561\times10^{9}+7.29408\times10^{7}t-572390.t^{2}+1996.32t^{3}%
-2.61094t^{4}, & r_{1}\leq t\leq191.6\\
-6.14729\times10^{8}+1.27786\times10^{7}t-99611.6t^{2}+345.106t^{3}%
-0.448357t^{4}, & 191.6<t\leq193.4\\
2.1483\times10^{9}-4.43429\times10^{7}t+343228.t^{2}-1180.75t^{3}%
+1.52322t^{4}, & 193.4<t\leq194.6\\
7.30449\times10^{8}-1.4952\times10^{7}t+114771.t^{2}-391.545t^{3}%
+0.500907t^{4}, & 194.6<t\leq r_{3}.
\end{array}
\right.  \label{Error2 Estimate in y}%
\end{equation}
\endgroup The graph of $x\left(  t+2.7\right)  $, $\dot{x}\left(
t+2.7\right)  $ and their respective estimates $x_{0}\left(  t\right)
+\psi_{x}\left(  t\right)  $, $\dot{x}_{0}\left(  t\right)  +\psi_{y}\left(
t\right)  $ are shown in Figure \ref{f:6}. \begin{figure}[ptb]
\begin{center}
\centering%
\begin{tabular}
[c]{rcr}%
\subfigure[]{\includegraphics[height=0.3\linewidth,width=0.38\linewidth]{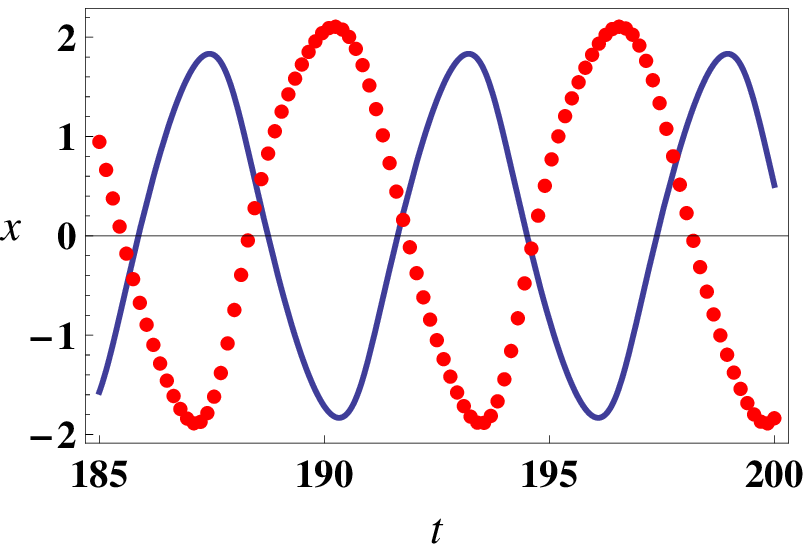}\label{f:6a}} &
\qquad\qquad &
\subfigure[]{\includegraphics[height=0.3\linewidth,width=0.36\linewidth]{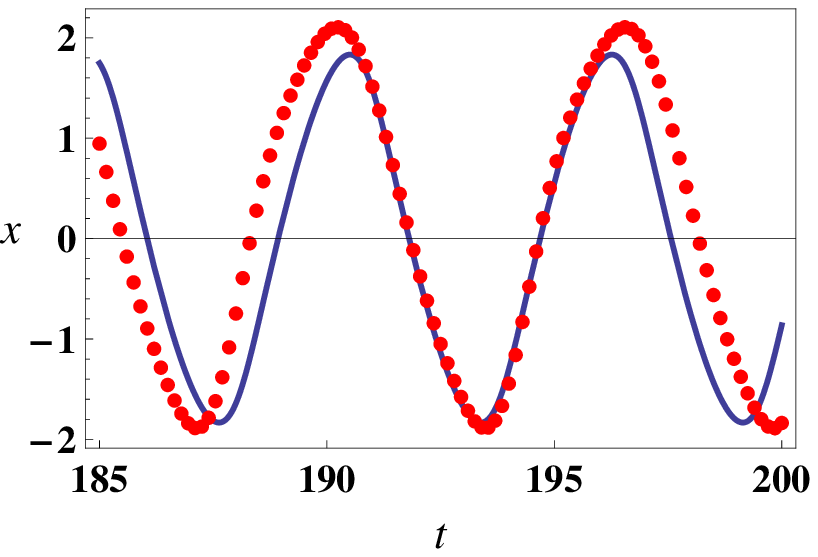}\label{f:6b}}$\left.
\ \right.  $\\
\subfigure[]{\includegraphics[height=0.3\linewidth,width=0.40\linewidth]{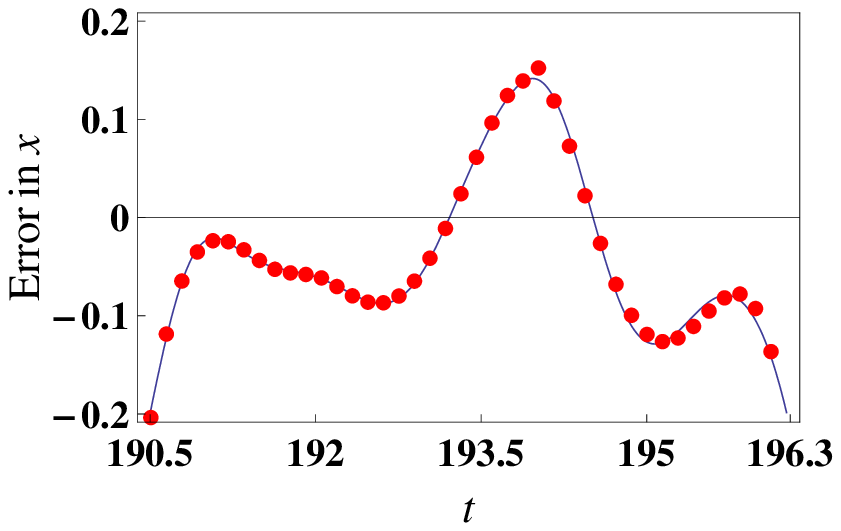}\label{f:6c}} &
\qquad\qquad &
\subfigure[]{\includegraphics[height=0.3\linewidth,width=0.40\linewidth]{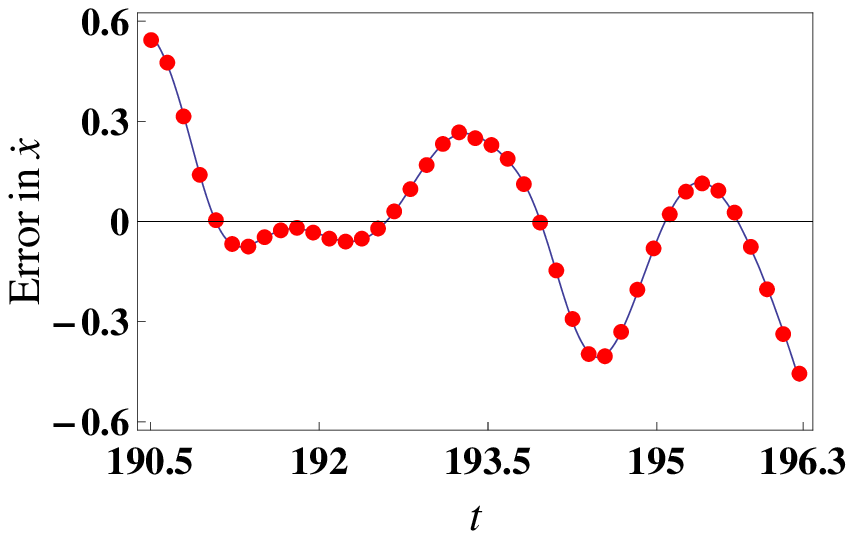}\label{f:6d}}\\
\subfigure[]{\includegraphics[height=0.3\linewidth,width=0.37\linewidth]{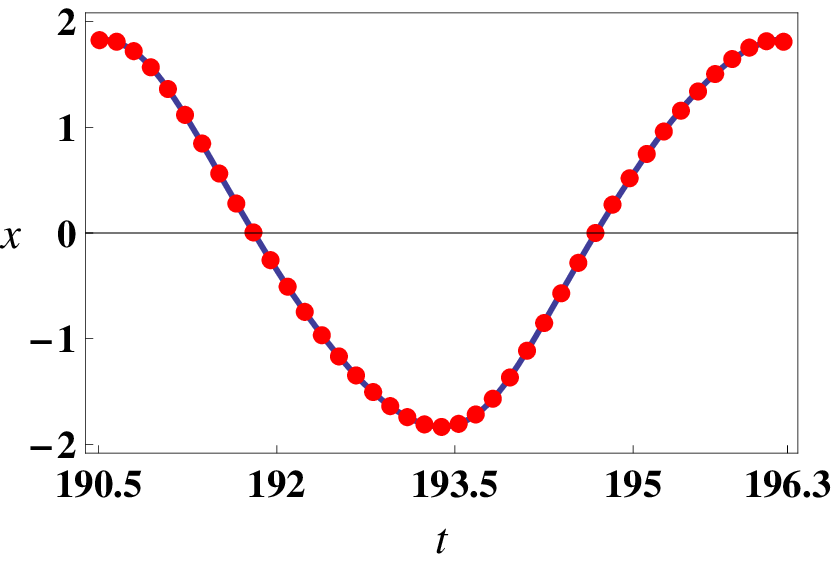}\label{f:6e}} &
\qquad\qquad &
\subfigure[]{\includegraphics[height=0.3\linewidth,width=0.38\linewidth]{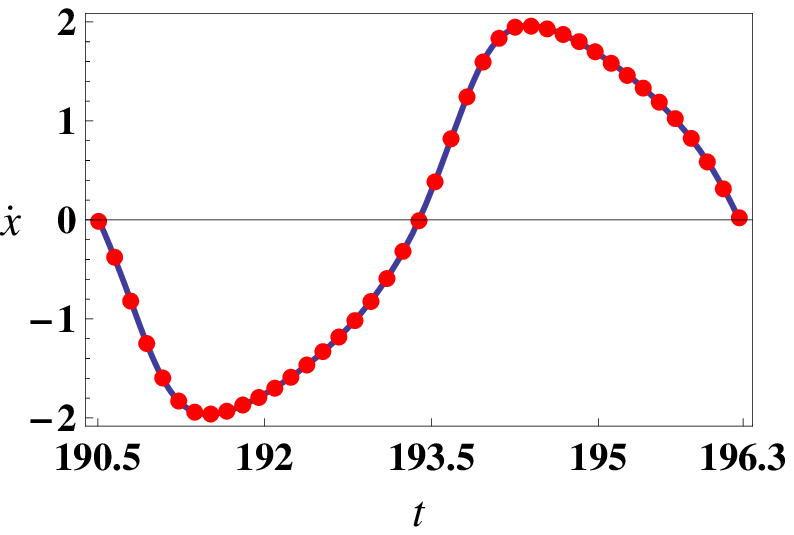}\label{f:6f}}
\end{tabular}
\end{center}
\caption{Improving the approximation shown in Figure \ref{f:3a} discussed in
Section \ref{Reg RLE}. Here \textbf{Solid} lines represent exact solutions and
\textbf{Dotted} lines represent approximate solutions.\medskip\newline$\left(
a\right)  $ Comparison of the exact solution $x\left(  t\right)  $ of $\left(
\text{\ref{RFRL Eqn}}\right)  $ and the RGM approximation given by $\left(
\text{\ref{RFRL RG New Sol}}\right)  $ for $185<t<200.$ $\left(  b\right)  $
Comparison of $x\left(  t+2.7\right)  $ and the RGM approximation given by
$\left(  \text{\ref{RFRL RG New Sol}}\right)  $ showing better agreement for
$185<t<200$. $\left(  c\right)  $ Comparison of exact error $x\left(
t+2.7\right)  -x_{0}\left(  t\right)  $ and its estimate of $\psi_{x}\left(
t\right)  $ given by $\left(  \text{\ref{Error2 Estimate in x}}\right)  $ in
$\left[  r_{1},r_{3}\right]  $. $\left(  d\right)  $ Comparison of exact error
$\dot{x}\left(  t+2.7\right)  -\dot{x}_{0}\left(  t\right)  $ and its estimate
of $\psi_{y}\left(  t\right)  $ given by $\left(
\text{\ref{Error2 Estimate in y}}\right)  $ in $\left[  r_{1},r_{3}\right]  $.
$\left(  e\right)  $ Comparison of $x\left(  t+2.7\right)  $ and its estimate
$x_{0}\left(  t\right)  +\psi_{x}\left(  t\right)  $. $\left(  f\right)  $
Comparison of $\dot{x}\left(  t+2.7\right)  $ and its estimate $\dot{x}%
_{0}\left(  t\right)  +\psi_{y}\left(  t\right)  $.}%
\label{f:6}%
\end{figure}

As in Example A, we next estimate percentage errors along $x$ and $y$
directions as $E_{x}=1.26\%$ and $E_{y}=1.08\%$ where $\tilde{a}=\max
_{r_{1}\leq t\leq r_{3}}\left\vert x\left(  t+2.7\right)  \right\vert
=1.83249,\text{ and }\ \max_{r_{1}\leq t\leq r_{3}}\left\{  x\left(
t+2.7\right)  -\left(  x_{0}\left(  t\right)  +\psi_{x}\left(  t\right)
\right)  \right\}  =0.0230739$, and similarly for $E_{y}$, and hence accuracy
of $O(98.74\%)$ and $O(98.92\%)$ is achieved along $x$ and $y$ directions
respectively. The estimated limit cycle with $x_{0}\left(  t\right)  +\psi
_{x}\left(  t\right)  $ and $\dot{x}_{0}\left(  t\right)  +\psi_{y}\left(
t\right)  $ is compared with exact limit cycle in Figure \ref{f:7}%
.\begin{figure}[ptb]
\begin{center}
\includegraphics[width=0.3\linewidth,width=0.3\linewidth]{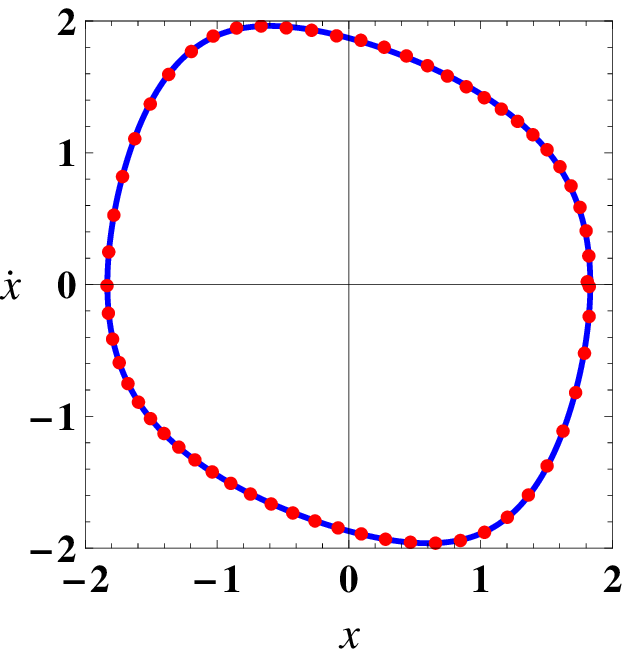}
\end{center}
\caption{The solid line represents numerically computed exact limit cycle as
discussed in Section \ref{Reg RLE} and the dotted line represents the
estimated limit cycle by $x_{0}\left(  t\right)  +\psi_{x}\left(  t\right)  $
and $\dot{x}_{0}\left(  t\right)  +\psi_{y}\left(  t\right)  $.}%
\label{f:7}%
\end{figure}

\subsection{Discussion and Comparison\label{Sec Discussion}}

The above two examples clearly demonstrate the role and significance of
dynamical time scales (\ref{dynnorm}) introduced along $x$ and $y$ in the
phase plane. In Figure \ref{f:4} - Figure \ref{f:7}, one clearly notices the
distributions of estimated data points undergoing a process of condensations
and rarefactions over specific intervals along the time axis that correspond
to slow and fast motions in the relaxation oscillation concerned. In Figure
\ref{f:5}, in particular, the branch of relaxation oscillation undergoing slow
building up accommodates more data points, when a sparser set of data points
is presented along the branch as the system relaxes very fast. This fact is
demonstrated explicitly in Figures \ref{f:4c}-\ref{f:4f} presenting different
error estimates relevant for the relaxation oscillation of Figure \ref{f:5}.
One notices clearly denser set of data points in the neighbourhood of $t=196$
and a sparser set near $t=197$ in all these four figures. Similar
condensations and rarefactions are also observed near $t=198$ and $t=195$
respectively, reflecting periodicity of the original orbit. Similar behaviours
are also observed in Figure \ref{f:6} and Figure \ref{f:7}. Such oscillations
in the density of estimated data points involving dynamical time scales along
independent phase space variables may be considered as an interesting novel
aspect of the present asymptotic formalism. The very high level of efficiency,
defined by $\left(  \text{\ref{effx}}\right)  $ and $\left(  \text{\ref{effy}%
}\right)  $, is obtained by a very simple computational scheme, developed in
the present formalism based, however, on some novel dynamical insights, has
the added advantage of visual demonstrations of the existence of slow fast
motions in the nonlinear oscillations. The present study is a specific
application of the nonlinear $SL\left(  2,%
\mathbb{R}
\right)  $ asymptotic analysis which, in fact, have a wider range of
applications \cite{ds15, palit_comparative_2016, dss18, dsr20} and
philosophical implications \cite{dpnw}. The computations presented involve
only a few $($actually one or two only$)$ harmonic terms when conventional
approaches based on either homotopy analysis method \cite{Cui-NA-2018} or
generalized averaging method etc. \cite{Luo_book, Xu-Luo-2019} require $10$ or
more harmonic terms to attain equivalent level of efficiency (accuracy).

\section{Conclusion\label{Sec Conclusion}}

The novel framework of $SL(2,%
\mathbb{R}
)$ invariant asymptotic structures is presented in the context of a nonlinear
oscillatory system. The significance and nontrivial applications of these
asymptotic structures are explained for evaluating highly efficient phase
portraits (accuracy level exceeding $98\%$) of period 1 relaxation cycles of
strongly nonlinear and singularly perturbed Rayleigh equations with periodic
external forcing. We argue that $SL(2,%
\mathbb{R}
)$ asymptotic structures is powerful enough to upgrade standard RG
computations on nonlinear orbits into highly efficient computations. New
asymptotic structures are introduced by implementing continuous deformation of
conventional linear slow scales such as $t_{n}=\varepsilon^{n}%
t,\ 0<\varepsilon<1$ into nonlinear \emph{dynamic} scales of the form
$T_{n}=t_{n}\sigma(t_{n})$, for a nonlinear deformation factor $\sigma(t_{n})$
respecting some well defined $SL(2,%
\mathbb{R}
)$ constraints. The concept and role of $(i)$ self dual deformation exponents
as well as $(ii)$ its extension to dynamic time scale deformations in a slow
-fast system involving very slow and very fast evolutionary directions, are
explained in detail in the context of efficient computations of period 1
relaxation oscillations. The formalism presented here is robust and has a
wider field of mathematical and other applications \cite{dpnw}. Application of
the formalism to more general dually related deformation exponents will be
considered separately in addressing period doubling bifurcations to chaos
systematically for nonlinear oscillatory problems.

\section*{Acknowledgments}

The second author (Dhurjati Prasad Datta) wish to thank IUCAA, Pune for
offering a visiting Associateship. He is also thankful to IUCAA Centre for
Astronomy Research and Development (ICARD), University of North Bengal for
offering its facilities.

\section*{Appendix}

It is known that the forced Rayleigh equation $\left(  \text{\ref{RFRL Eqn}%
}\right)  $ has a unique limit cycle around the critical point $\left(
0,0\right)  $ in the $x,\dot{x}$ phase plane. So, the shape of the limit cycle
does not depend on the initial condition for the system $\left(
\text{\ref{RFRL Eqn}}\right)  $. Thus, for the sake of completeness we choose
the initial condition
\begin{equation}
x\left(  t_{0}\right)  =\alpha,\ \dot{x}\left(  t_{0}\right)  =\beta\label{IC}%
\end{equation}
where, $\left(  \alpha,\beta\right)  \neq\left(  0,0\right)  $. We are trying
to find limit cycle solution when the frequency $\Omega$ of the external
periodic force is close to the frequency of the system. Therefore, we assume%
\begin{equation}
\Omega=1+\varepsilon\sigma,\ \varepsilon>0 \label{Resonant Frequency}%
\end{equation}
and the limit cycle solution can be written as the perturbative series%
\begin{equation}
x\left(  t\right)  =x_{0}\left(  t\right)  +\varepsilon\ x_{1}\left(
t\right)  +\varepsilon^{2}x_{2}\left(  t\right)  +\cdots\text{.} \label{Pert}%
\end{equation}
Using $\left(  \text{\ref{Resonant Frequency}}\right)  $ and $\left(
\text{\ref{Pert}}\right)  $ in $\left(  \text{\ref{RFRL Eqn}}\right)  $ and
comparing coefficients of different orders of $\varepsilon$ we obtain the
following equations.
\begin{subequations}
\label{RFRL Formulation}%
\begin{align}
\text{Zero-th Order}  &  :\ddot{x}_{0}(t)+\omega^{2}x_{0}(t)=0\text{ with
}x\left(  t_{0}\right)  =\alpha,\ \dot{x}\left(  t_{0}\right)  =\beta
,\label{Order Zero DE}\\
\varepsilon\text{ Order}  &  :\ddot{x}_{1}(t)+\omega^{2}x_{1}(t)+\frac{1}%
{3}\left(  \dot{x}_{0}(t)\right)  {}^{3}-\dot{x}_{0}(t)-F\cos\omega t=0,\quad
x_{1}\left(  t_{0}\right)  =0,\ \dot{x}_{1}\left(  t_{0}\right)
=0,\label{Order One DE}\\
\varepsilon^{2}\text{ Order}  &  :\ddot{x}_{2}(t)+\omega^{2}x_{2}(t)+\dot
{x}_{1}(t)\left(  \dot{x}_{0}(t)\right)  {}^{2}-\dot{x}_{1}(t)+F\ t\ \sigma
\sin\omega t=0,\text{\quad}x_{2}\left(  t_{0}\right)  =0,\ \dot{x}_{2}\left(
t_{0}\right)  =0,\label{Order Two DE}\\
&  \cdots\ \cdots\ \cdots\ \cdots\qquad\cdots\ \cdots\ \cdots\ \cdots
\qquad\cdots\ \cdots\ \cdots\ \cdots\qquad\cdots\ \cdots\ \cdots
\ \cdots\nonumber
\end{align}
The general solution of the zero-th order equation $\left(
\text{\ref{Order Zero DE}}\right)  $ can be written as%
\end{subequations}
\begin{equation}
x_{0}\left(  t\right)  =Ae^{i\left(  t-t_{0}\right)  }+A^{\ast}e^{-i\left(
t-t_{0}\right)  }, \label{Zoro Sol}%
\end{equation}
where $A$ is a complex constant. Although $A$ can be determined from the
initial conditions, we keep $A$ undetermined because the computational steps
in the RGM will ultimately replace $A$ by its counterpart $\mathcal{A}$, which
will not remain a constant of motion. The solution of the $\varepsilon$ order
equation $\left(  \text{\ref{Order One DE}}\right)  $ is%
\begin{equation}
x_{1}\left(  t\right)  =\left(  \frac{1}{24}iA^{3}+\frac{1}{2}\left(
A-A^{2}A^{\ast}\right)  \left(  t-t_{0}\right)  -\frac{F\ i}{4}\left(
t-t_{0}\right)  \right)  e^{i\left(  t-t_{0}\right)  }-\frac{1}{24}%
iA^{3}e^{3i\left(  t-t_{0}\right)  }+c.c. \label{One Sol}%
\end{equation}
where $c.c.$ designates the complex conjugate of the preceding expression.
Similarly, we can continue to higher order solution to generate the naive
perturbative solution of the system $\left(  \text{\ref{RFRL Eqn}}\right)  $.

In the second step of the RGM, we renormalize the integration constant $A$ and
create a counterpart $\mathcal{A}$ as%
\[
A=\mathcal{A+}a_{1}\mathcal{\varepsilon}+a_{2}\varepsilon^{2}+\cdots
\]
where, the coefficients $a_{1}$, $a_{2}$, \ldots\ are chosen to absorb the
homogeneous parts of the solution in different orders of $\varepsilon$.
Choosing%
\begin{equation}
a_{1}=-\dfrac{1}{24}i\mathcal{A}^{3} \label{a1Val}%
\end{equation}
we get,%
\begin{equation}
x\left(  t\right)  =\mathcal{A}e^{i\left(  t-t_{0}\right)  }+\varepsilon
\left\{  \dfrac{1}{2}\left(  \mathcal{A}-\mathcal{A}^{2}\mathcal{A}^{\ast
}\right)  \left(  t-t_{0}\right)  e^{i\left(  t-t_{0}\right)  }-\dfrac
{F\ i}{4}\left(  t-t_{0}\right)  e^{i\left(  t-t_{0}\right)  }-\dfrac{1}%
{24}i\mathcal{A}^{3}e^{3i\left(  t-t_{0}\right)  }\right\}  +c.c.+O\left(
\varepsilon^{2}\right)  . \label{SolHalfRG1}%
\end{equation}

In the third step, we need to differentiate the expression containing
$e^{i\left(  t-t_{0}\right)  },\ e^{-i\left(  t-t_{0}\right)  },\ \left(
t-t_{0}\right)  e^{i\left(  t-t_{0}\right)  }$ and $\left(  t-t_{0}\right)
e^{-i\left(  t-t_{0}\right)  }$ in $\left(  \text{\ref{SolHalfRG1}}\right)  $
with respect to $t_{0}$ and thereafter substituting $t_{0}=t$ the resultant
expression is equated to zero to get the RG condition $\left(
\text{\ref{RG Eq}}\right)  $. The terms related to higher harmonic are not
involved in RG condition. Thus, $\left(  \text{\ref{RG Eq}}\right)  $ gives%
\begin{equation}
\left.  \frac{\partial\mathcal{A}}{\partial t_{0}}\right\vert _{t_{0}%
=t}=\mathcal{A\ }i+\varepsilon\ \dfrac{1}{2}\left(  \mathcal{A}-\mathcal{A}%
^{2}\mathcal{A}^{\ast}\right)  -\dfrac{F\ i}{4}\varepsilon+O\left(
\varepsilon^{2}\right)  \label{RG Condition}%
\end{equation}
where, $\mathcal{A}$ is not a constant of motion, rather it is a function of
time $t$. Using $\left(  \text{{\ref{Complex to Polar}}}\right)  $ in $\left(
\text{\ref{RG Condition}}\right)  $ where $\Omega=1+\varepsilon\sigma$ and
comparing the real and imaginary components we get after simplification
\begin{subequations}
\label{RG Eq RFRL}%
\begin{align}
\dfrac{dR}{dt} &  =\varepsilon\dfrac{R}{2}\left(  1-\dfrac{R^{2}}{4}\right)
-\dfrac{F}{2}\varepsilon\sin\theta+O\left(  \varepsilon^{2}\right)
\label{RG Eq1 RFRL New}\\
\dfrac{d\theta}{dt} &  =1-\dfrac{F}{2R}\varepsilon\cos\theta+O\left(
\varepsilon^{2}\right)  \text{.}\label{RG Eq2 RFRL New}%
\end{align}

The corresponding limit cycle solution is finally obtained from $\left(
\text{\ref{SolHalfRG1}}\right)  $ by putting $t_{0}=t$ and using the polar
decomposition of $\mathcal{A}$ given by (\ref{Complex to Polar}), in the form%

\end{subequations}
\begin{equation}
x\left(  t\right)  =R\left(  t\right)  \cos\theta(t) +\varepsilon\frac
{R^{3}\left(  t\right)  }{96}\sin3 \theta(t) \text{.} \label{RG Sol}%
\end{equation}

Taking%
\[
\theta=\phi+\Omega\ t=\phi+\left(  1+\varepsilon\sigma\right)  t
\]
we write the RG flow equations $\left(  \text{\ref{RG Eq RFRL}}\right)  $ as
\begin{subequations}
\label{RG Eq RFRL New}%
\begin{align}
\dfrac{dR}{dt}  &  =\varepsilon\dfrac{R}{2}\left(  1-\dfrac{R^{2}}{4}\right)
-\dfrac{F}{2}\varepsilon\sin\left(  \phi+\Omega\ t\right)  +O\left(
\varepsilon^{2}\right) \\
\dfrac{d\phi}{dt}+\varepsilon\sigma &  =-\dfrac{F}{2R}\varepsilon\cos\left(
\phi+\Omega\ t\right)  +O\left(  \varepsilon^{2}\right)  \text{.}%
\end{align}

For steady state motion $\frac{dR}{dt}=0$ and $\frac{d\phi}{dt}=0$ so that
$\left(  \text{\ref{RG Eq RFRL}}\right)  $ give%

\end{subequations}
\begin{align*}
\frac{R}{2}\left(  1-\frac{R^{2}}{4}\right)   &  =\frac{F}{2}\sin\left(
\phi+\Omega\ t\right) \\
R\sigma &  =-\frac{F}{2}\cos\left(  \phi+\Omega\ t\right)  \text{.}%
\end{align*}
Squaring and then adding we get,%
\begin{align*}
&  \left.  \frac{R^{2}}{4}\left(  1-\frac{R^{2}}{4}\right)  ^{2}+R^{2}%
\sigma^{2}=\frac{F^{2}}{4}\right.  \text{.}%
\end{align*}
Taking%
\[
\rho=\frac{R^{2}}{4}\text{ and }k=F
\]
we get,%
\begin{equation}
\rho\left(  1-\rho\right)  ^{2}+4\sigma^{2}\rho=\frac{k^{2}}{4},
\label{Freq-Response Eq}%
\end{equation}
which is exactly same as equation $\left(  4.3.15\right)  $ in the book of
Nayfeh and Mook \cite{nayfey_nonlinear_1995}, Page 205. The region giving the
stable limit cycle is the common region of $\Delta>0$ and $\rho>\frac{1}{2}$
in $\rho\sigma$ plane. If we take%
\begin{equation}
\varepsilon=0.5,\ k=F=0.5,\ \sigma=0.1 \label{RFRL Parameter Val}%
\end{equation}
then $\left(  4.3.15\right)  $ gives three values of $\rho$ as%
\[
\rho=6.8915\times10^{-2},\ 0.80632\text{ and}\ 1.1248\text{.}%
\]
It is discussed in \cite{nayfey_nonlinear_1995} that for $k>0$ the largest
value of $\rho$ always gives stable limit cycle. Therefore, we take the
largest value $\rho=1.1248$ so that%
\[
\Delta=8.408\,1\times10^{-2}>0.
\]
Therefore, this choice of parameters given by $\left(
\text{\ref{RFRL Parameter Val}}\right)  $ satisfy both the convergence
criteria%
\[
\Delta>0\text{ and }\rho>\frac{1}{2}\text{.}%
\]
Thus, for this values of the parameters the limit cycle given by $\left(
\text{\ref{RG Sol}}\right)  $ is shown in Figure \ref{f:8a} by dotted line
along with the numerical (exact) limit cycle in solid line for the Rayleigh
equation $\left(  \text{\ref{RFRL Eqn}}\right)  $.

\begin{figure}[ptb]
\begin{center}
\centering%
\begin{tabular}
[c]{rcr}%
\subfigure[]{\includegraphics[height=0.3\linewidth]{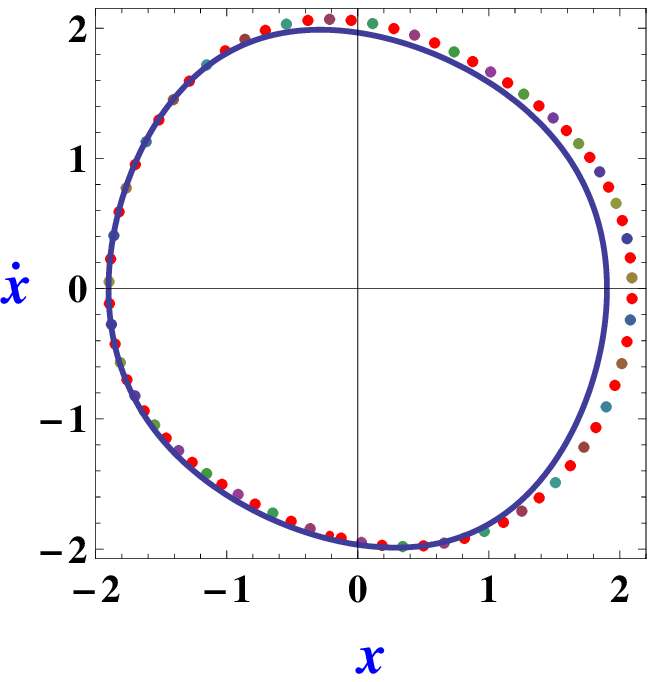}\label{f:8a}} &
\qquad\qquad &
\subfigure[]{\includegraphics[height=0.3\linewidth]{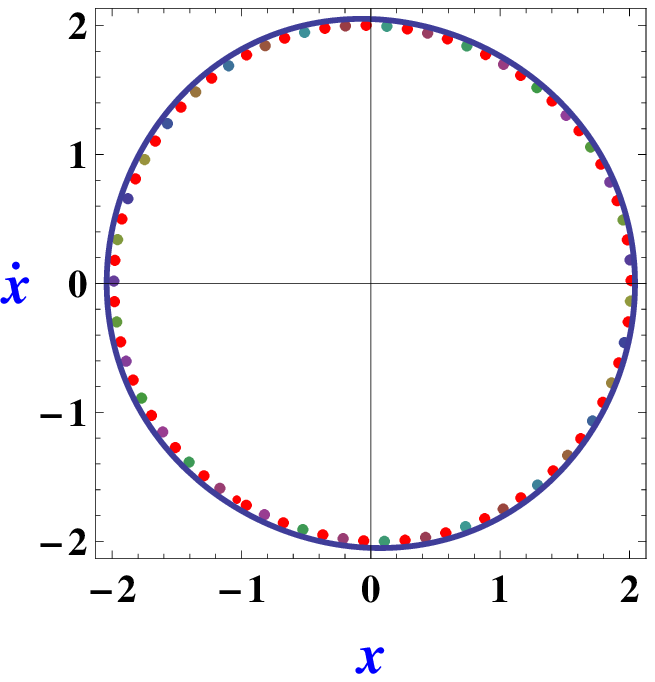}\label{f:8b}}
\end{tabular}
\end{center}
\caption{{Comparison of limit cycles given by the approximate solution
}$\left(  \text{\ref{RG Sol}}\right)  ${ using RGM (in dotted line) with the
exact numerical limit cycle (in solid line) for }$\left(  a\right)  $
$\varepsilon=0.5,\ k=F=0.5,$ $\sigma=0.1$ and $\left(  b\right)
\ \varepsilon=0.1,\ k=F=0.25,$ $\sigma=0.05$.}%
\label{f:8}%
\end{figure}Next, for%
\begin{equation}
\varepsilon=0.1,\ k=F=0.25,\ \sigma=0.05 \label{RFRL Parameter Val2}%
\end{equation}
we have%
\[
\rho=1.5971\times10^{-2},\ 0.91599\text{ and }1.068\text{.}%
\]
As discussed in \cite{nayfey_nonlinear_1995} for $k>0$ the largest value of
$\rho$ always gives stable limit cycle. Therefore, we take the largest value
$\rho=1.068$ so that%
\[
\Delta=3.99\times10^{-2}>0\text{.}%
\]
Therefore, this choice of parameters given by $\left(
\text{\ref{RFRL Parameter Val2}}\right)  $ satisfy both the convergence
criteria%
\[
\Delta>0\text{ and }\rho>\frac{1}{2}\text{.}%
\]
Thus, for this values of the parameters the limit cycle given by $\left(
\text{\ref{RG Sol}}\right)  $ is shown in Figure \ref{f:8b} by dotted line
along with the numerical (exact) limit cycle in solid line for the Rayleigh
equation $\left(  \text{\ref{RFRL Eqn}}\right)  $.

\bibliographystyle{IEEEtran}
\bibliography{PDR-Bib}

\end{document}